\newcommand{\hide}[1]{\ifthenelse{\boolean{false}}{#1}{}}
\newtheorem{lemma}{{\bf Lemma}}
\newtheorem{proposition}{\bf Proposition}
\newtheorem{remark}{{\bf Remark}}
\newcommand{\qed}{\nobreak \ifvmode \relax \else
      \ifdim\lastskip<1.5em \hskip-\lastskip
      \hskip1.5em plus0em minus0.5em \fi \nobreak
      \vrule height0.75em width0.5em depth0.25em\fi}
\newcommand{\barr}{\begin{array}}
\newcommand{\earr}{\end{array}}
\newcommand{\benum}{\begin{enumerate}}
\newcommand{\eenum}{\end{enumerate}}
\newcommand{\bit}{\begin{itemize}}
\newcommand{\eit}{\end{itemize}}
\newcommand{\bc}{\begin{center}}
\newcommand{\ec}{\end{center}}
\newcommand{\bdes}{\begin{description}}
\newcommand{\edes}{\end{description}}
\newcommand{\bfig}{\begin{figure}}
\newcommand{\efig}{\end{figure}}
\newcommand{\bemq}{\begin{quote} \begin{em}}
\newcommand{\eemq}{\end{em} \end{quote}}
\newcommand{\bmp}{\begin{minipage}}
\newcommand{\emp}{\end{minipage}}
\newcommand{\eqn}[1]{(\ref{#1})}
\newcommand{\brac}[1]{\left({#1}\right)}
\newcommand{\sbrac}[1]{\left[{#1}\right]}
\newcommand{\cbrac}[1]{\left\{{#1}\right\}}
\newcommand{\bsp}{\begin{slide*}}
\newcommand{\esp}{\end{slide*}}
\newcommand{\bsl}{\begin{slide}}
\newcommand{\esl}{\end{slide}}
\newcommand{\UN}{\bar{\mathcal{U}}^{(j)}_N}
\newcommand{\Ub}{\bar{\mathcal{U}}}
\newcommand{\U}{{\mathcal{U}}}
\newcommand{\norm}[1]{\|{#1}\|}
\newcommand{\abs}[1]{\left \vert {#1} \right \vert}
\newcommand{\mb}[1]{\mathbb{#1}}
\newcommand{\mf}[1]{\mathbf{#1}}
\newcommand{\bs}[1]{\boldsymbol{#1}}
\newcommand{\mv}[3]{{#1}^{(#3)}_{#2}}
\begin{document}
%
\title{Analysis of Randomized Join-The-Shortest-Queue (JSQ) Schemes in Large Heterogeneous Processor Sharing Systems}

\author{Arpan Mukhopadhyay and Ravi R. Mazumdar,~\IEEEmembership{Fellow,~IEEE}

\thanks{The authors are with the department of Electrical
and Computer Engineering, University of Waterloo, Waterloo, 
ON N2L 3G1, Canada.}
\thanks{e-mail: arpan.mukhopadhyay@uwaterloo.ca, mazum@ece.uwaterloo.ca}}

\maketitle

\begin{abstract}

In this paper we investigate the stability and performance of randomized  dynamic routing schemes for jobs based on the Join-the-Shortest Queue (JSQ) criterion to a 
heterogeneous system of many parallel servers.
In particular we consider servers that use processor sharing but with different server rates and jobs are routed to the server with the smallest occupancy 
among a finite number of randomly selected servers. We focus on the case of two servers that is often referred to as a Power-of-Two scheme. 
We first show that in the heterogeneous setting there can be a loss in the stability region over the homogeneous case
and thus such randomized schemes need not outperform static randomized schemes in terms of mean delay in opposition to the homogeneous case of equal server speeds where the stability region is maximal and coincides with that of static randomized routing. 
We explicitly characterize the stationary distributions of the server occupancies and show that the tail distribution of the server occupancy has a super-exponential behavior as in the homogeneous case as the number of servers goes to infinity.
To overcome the stability issue, we show that it is possible to combine static state-independent scheme with 
randomized JSQ scheme that allows us to recover the maximal stability region combined with the benefits 
of JSQ and such a scheme is preferable in terms of average delay. 
The techniques are based on a mean field analysis where we show that the stationary 
distributions coincide with those obtained under asymptotic independence of the servers and 
moreover the stationary distributions are insensitive to the job size distribution.
\end{abstract}

\begin{IEEEkeywords}
Load balancing, Processor sharing, Power-of-two, Mean Field Approach, Asymptotic independence, Insensitivity
\end{IEEEkeywords}

%
\IEEEpeerreviewmaketitle

\section{Introduction}

A central problem in a multi-server
resource sharing system is to decide
which server an incoming job will be assigned to in order to obtain optimum performance, typically the low average response time.
The problem becomes more challenging when
the number of servers in the system is large
and the servers have different service rates.
Such systems are frequently encountered 
in large web server farms that
accommodate a large number of front end servers of various service rates to
process incoming job requests~\cite{Lu_Performance_2011}.
The load balancing scheme used
plays a key role
in determining the mean sojourn time of jobs
in such systems. Since web applications
such as online search, social networking are extremely delay sensitive,
a small increase in the average response time of jobs may cause
significant loss of revenue and users~\cite{Delay_study_google}. 
Therefore, the main objective of efficient load balancing
is to reduce the mean sojourn
times of jobs in the system. Another desirable property of a routing scheme should be its {\em robustness} 
to heterogeneity of job sizes of which a typical statistical behavior is insensitivity to job size distributions.

The join-the-shortest-queue (JSQ) scheme, commonly used
in small web server farms~\cite{Lu_Performance_2011,Salchow_load_2007}, 
assigns a new arrival to the server having the least number
of unfinished jobs in the system. 
Recently, Gupta {\em et al}~\cite{Gupta_Performance_2007} 
showed that for a system of identical processor sharing (PS) servers
JSQ is nearly optimal in terms of minimizing the mean sojourn time of jobs
and results in near insensitivity of the system to the type of job length
distribution. 


However, a major drawback of the
JSQ scheme, when applied to a system consisting of a large number
of servers, is that it requires the
state information of all the servers in the system to make job assignment
decisions. The use of dynamic randomized algorithms is 
one way to avoid requiring information about all server occupancies.
It has been shown that randomized load balancing schemes based on sampling only a 
few servers provides most of the reduction in mean sojourn times 
associated with JSQ  \cite{Vvedenskaya_inftran_1996}.
Indeed as argued in \cite{Mitzenmacher_thesis,
Mitzenmacher_IEEE_2001,Vvedenskaya_inftran_1996},  most of the gains in average sojourn time reduction are obtained when selecting 2 servers at random referred to as the Power-of-Two rule.
This is also referred to as the SQ(2) scheme.

The literature has treated the SQ(2)
scheme  for a system of identical FCFS servers
assuming exponential job length distribution. The exact analysis is still difficult 
because of the coupling between the servers. However, as the work in~\cite{Graham, Vvedenskaya_inftran_1996, Mitzenmacher_IEEE_2001} 
has shown, when the number of servers goes to infinity any finite collections of servers 
can be viewed as independent. This is termed as {\em asymptotic independence} or {\em propagation
of chaos}. With this insight, 
it was shown that in the large system limit
the stationary tail distribution of the number of remaining jobs
at each server decays doubly exponentially as compared to the
exponential decay in case of the optimal state independent scheme
in which job assignments are done independently of the states of the servers.
Consequently, the SQ(2) scheme results in an exponential reduction in the mean sojourn
time of jobs as compared to the optimal state independent scheme.

The analysis of the SQ(2) scheme was further generalized to
include general job length distributions in~\cite{Bramson_asymp_indep,Bramson_randomized_load_balancing}.
However, the analyses in~\cite{Mitzenmacher_thesis,
Mitzenmacher_IEEE_2001,Vvedenskaya_inftran_1996,
Bramson_asymp_indep,Bramson_randomized_load_balancing}
were restricted to the homogeneous case where the servers in the system are identical
in terms of the server speed.

In this paper, we first  analyze the performance of the classical SQ(2) scheme
with uniform sampling for a large system of PS servers with heterogeneous service rates for which there are no available results. 
The first issue that arises is the issue of the stability region for such systems or the maximum
load that the system can support and still yield finite average sojourn times. In particular, 
we show that the stability region is strictly smaller than the maximal stability
region obtained by restricting the normalized arrival rate below the average
capacity of the system. Thus,  
it is possible that the average sojourn time behavior can be worse than 
static randomized routing schemes. We then provide a detailed analysis of the heterogeneous 
system and provide a complete characterization of the stationary distribution when it exists. 
We show that under the SQ(2) scheme the system is asymptotically insensitive
to the type of job length distributions. 
To overcome the reduction in stability we show that a scheme that combines 
static randomized routing to a server class, i.e., sampling with a bias and SQ(2) with uniform sampling within 
servers of the same class, allows us to recover the maximal stability region. We show that this hybrid scheme 
 retains the gains of the SQ(2) scheme. This scheme is therefore always superior 
in the sense of smaller average sojourn time over static randomized routing schemes.

The techniques are based on a mean field approach that extends 
the methodology used in ~\cite{Vvedenskaya_inftran_1996, Martin_AAP_1999} 
for FCFS queues with  exponential job lengths to the heterogeneous PS scenario. 
We show uniqueness of the solution under stability.  Furthermore in the asymptotic limit the stationary distribution of the server occupancies also coincides 
with that obtained by assuming asymptotic independence for any finite subset of the servers \cite{Graham,Bramson_asymp_indep}.

The organization of the paper is as follows. In Section~\ref{sec:model}, 
we present the system model and provide a description of the load balancing schemes 
studied in this paper. Section~\ref{sec:analysis} presents detailed analyses of 
the load balancing schemes. In Section~\ref{sec:numerics}, numerical results are presented
to compare the different schemes and validate the theoretical analyses. The paper is finally concluded
in Section~\ref{sec:conclusions}. 

\section{System Model}
\label{sec:model}

We consider a system consisting of $N$ parallel 
processor sharing (PS) servers with heterogeneous service rates or capacities.
The service rate, $C$, of a server is defined as the
time rate at which it processes a single job
assigned to it. If $x(t)$ jobs are
present at a server of capacity $C$ at time $t$, then the rate
at which each job is processed at time $t$ is given by $C/x(t)$. 
We assume that a server
can have one of the $M$ possible values of service rate
from the set $\mathcal{C}=\cbrac{C_1,C_2,\ldots,C_M}$.
Define the index set $\mathcal{J}=\cbrac{1,2,\ldots,M}$.
For each $j \in \cal{J}$, let the proportion of servers
with service rate $C_j$ be denoted by $\gamma_j$ ($0 \leq \gamma_j \leq 1$).
Clearly, $\sum_{j=1}^{M} \gamma_j=1$.

Jobs arrive according to a Poisson process 
with rate $N\lambda$. Each job requires a random  
amount of work and the job sizes are independent and identically 
distributed, with a finite mean $\frac{1}{\mu}$. 
The inter-arrival times and the job lengths are assumed
to be independent of each other. 
Upon arrival, a job is assigned to one of the $N$ servers 
according to a randomized load balancing scheme. We now discuss the load balancing
schemes considered in this paper.

\subsection{Scheme~1: Optimal state independent scheme}

As a baseline, we consider a scheme that assigns  an incoming job to a
server with a fixed probability, independent of the 
current state of the servers in the system. We denote by $p_j$, $j \in \cal{J}$, 
the probability that an arrival is assigned to one
of the servers having capacity $C_j$. The probabilities
$p_j$, $j \in \cal{J}$, are chosen in such a way that
the mean sojourn time of the jobs is minimized.
Clearly, in this scheme, no communication is required between the
job dispatcher and the servers as the job assignment decisions 
are made independently of the state of the servers. 

\subsection{Scheme~2: The SQ(2) scheme}

In this scheme, a subset of two servers is selected
from the set of $N$ servers uniformly at random at each arrival instant. 
The arriving job is then 
assigned to the server having the least
number of unfinished jobs among the two chosen servers.
In case of a tie, the job is assigned to any one
of the two servers with equal probability $\frac{1}{2}$.
\footnote{The analysis of the SQ(2) scheme can be readily generalized to the SQ($d$) scheme
where an incoming job is assigned to the least loaded server among $d$ randomly chosen ones
at the cost of more notation and complication. However, 
since the SQ(2) scheme provides most of the improvements, we do not pursue it in this paper.}

\subsection{Scheme~3: A hybrid SQ(2) scheme}

This scheme combines the state independent scheme with the SQ(2) scheme.
In this scheme, upon arrival of a new job,
the router first chooses a capacity value
$C_j$, $j \in \cal{J}$, with probability $p_j$.
Two servers having the selected value of capacity
are then chosen uniformly at random from set of available
servers with having that capacity. The job is then
routed to the server having the least number of unfinished
jobs among the two chosen servers. Ties are broken by tossing
a fair coin. The probabilities $p_j$,
for $j \in \cal{J}$, are chosen in such a way that the mean sojourn time
of jobs in the system is minimized. 

\section{Analysis}
\label{sec:analysis}

In this section, we present the analysis of the load balancing schemes
described in the previous section. Since Scheme~1
is a special case of the more general class of load balancing 
schemes analyzed in~\cite{Altman_Opt_load_2008},
we only state the main analytical results for Scheme~1 in Sec.~\ref{sec:scheme1}
without giving the proofs. These results are used later
to compare the different load balancing schemes considered in this paper. 
The detailed analyses of the SQ(2) scheme and the hybrid SQ(2) scheme
are provided in Section~\ref{sec:scheme2} and Section~\ref{sec:scheme3}, respectively. 

\subsection{Scheme~1: Optimal state independent scheme}
\label{sec:scheme1}

In Scheme~1, a job
is assigned to a server with a fixed probability, independent of
the instantaneous states of the servers in the system. 
Hence, under this scheme, 
the system reduces to
a set of independent parallel M/G/1 processor sharing servers. 
It follows directly from Proposition~1
of~\cite{Altman_Opt_load_2008}, that
there exists probabilities $p_j$, $j \in \cal{J}$,
for which the system is stable under Scheme~1 
if and only if the following condition holds:

\begin{equation}
\lambda \in \Lambda=\cbrac{0 \leq  \lambda < \mu \sum_{j=1}^{M} \gamma_j C_j}.
\label{eq:cond_scheme1}
\end{equation}

%
It was also shown in~\cite{Altman_Opt_load_2008} that,
under the above stability condition, the routing probabilities
$p_j$, $j \in \cal{J}$ can be chosen such that the mean sojourn
time of jobs in the system is minimized. The mean sojourn time
minimization problem, formulated as a convex optimization problem,
was solved in Theorem~1 of~\cite{Altman_Opt_load_2008}. 
It was found that the index set $\mathcal{J}_{\text{opt}}=\cbrac{1,2,\ldots,j^*} \subseteq \cal{J}$ of server capacities
and the loads $\bs{\rho}^*=\cbrac{\rho_1^*, \rho_2^*,\ldots,\rho_M^*}$
in the optimal state independent scheme are given by

\begin{equation}
j^*=\sup\cbrac{j \in \mathcal{J}: \frac{1}{\sqrt{C_j}} < \frac{\sum_{i=1}^{j}
\gamma_i \sqrt{C_i}}{\sum_{i=1}^{j} \gamma_i C_i -\frac{\lambda}{\mu}}}.
\label{eq:optset_scheme1}
\end{equation}

\begin{equation}
\rho_i^*=\begin{cases} 1-\sqrt{\frac{1}{C_i}} \frac{\sum_{k=1}^{j^*}\gamma_k C_k-\frac{\lambda}{\mu}}{\sum_{k=1}^{j^*}\gamma_k \sqrt{C_k}}, &\mbox{if } i \in \cal{J}_{\text{opt}}\\
           0, &\mbox{otherwise.}\end{cases}
\label{eq:optrho_scheme1}
\end{equation}

Here, we have assumed that the the server capacities are ordered as $C_1 \geq C_2 \geq \ldots \geq C_M$.
The optimal routing probabilities $p_j^*$, $j \in \cal{J}$,
can be computed from~\eqn{eq:optrho_scheme1}
by using the relations $\rho_j^*=\frac{p_j^* \lambda}{\gamma_j \mu C_j}$.

\subsection{Scheme~2: The SQ(2) scheme}
\label{sec:scheme2}

In the SQ(2) scheme, the job assignments are done
based on the instantaneous states of two randomly selected servers
in the system. Therefore, unlike the state independent scheme,
in this scheme, the arrival processes to the individual servers
are {\em not} independent of each other. This makes the exact analytical computation
of the stationary distribution very difficult for a finite value of $N$.
However, the mean field approach outlined in~\cite{Vvedenskaya_inftran_1996, Martin_AAP_1999}
or the propagation of chaos arguments used in~\cite{Graham, Bramson_randomized_load_balancing, Bramson_asymp_indep}
allow us to analytically characterize the behaviour of the system under this scheme
in the limit as $N \rightarrow \infty$. It will be later shown through simulation
results that such asymptotic analysis accurately captures the behaviour of
a large but finite system of servers. 

In this paper we say that a Markov process is stable if it is positive Harris recurrent. We now characterize the stability region of the system described in Scheme 2.

Let $N^*$ denote the
smallest positive integer ($> 2$) such that $\gamma_j N^*$ 
is a positive integer for all $j \in \cal{J}$. 
Now, let $\Lambda_k$, for $k \in \mb{N}$, denote the stability
region of the system under Scheme~2 when there are $N=kN^*$ servers in the system.
The following proposition characterizes the sets $\Lambda_k$ for $k \in \mb{N}$.

\begin{proposition}
\label{thm:stability}
For  $\Lambda_k$, $k \in \mb{N}$  defined as above, and $\Lambda$ as given in~\eqn{eq:cond_scheme1},
we have

\begin{equation}
\Lambda \supseteq \Lambda_1 \supseteq \Lambda_2 \supseteq \ldots
\label{eq:set_seq}
\end{equation}
Furthermore, if $\Lambda_{\infty}=\cap_{k=1}^{\infty} \Lambda_k$, then $\Lambda_{\infty}$ is given by

\begin{equation}
\Lambda_{\infty}= \cbrac{0  \leq \lambda < \mu \min_{\mathcal{I}\subseteq \mathcal{J}} \cbrac{\frac{\brac{\sum_{j \in \mathcal{I}} \gamma_j C_j}}{\brac{\sum_{j \in \mathcal{I}} \gamma_j}^2}}}
\label{eq:lam_infty}
\end{equation}
\end{proposition}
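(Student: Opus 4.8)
The plan is to establish the three separate assertions in turn: (i) monotonicity $\Lambda_k \supseteq \Lambda_{k+1}$, (ii) the uppermost containment $\Lambda \supseteq \Lambda_1$, and (iii) the explicit identification of $\Lambda_\infty$. For (i), I would use a coupling/scaling argument: a system with $N=kN^*$ servers can be partitioned into $k$ disjoint blocks of $N^*$ servers each, and one shows that stability of the larger system forces stability of each block. More directly, I would argue via the fluid/mean-field drift: the tail of the occupancy measure in the $N=kN^*$ system is governed by a fixed-point equation whose solvability (finiteness of the stationary mean) becomes \emph{harder} as $k$ grows because the probability that a random sample of two servers lands entirely within the "slow" sub-population increases as the empirical fractions $\gamma_j$ are realized more exactly — for small $k$ the integer constraint $\gamma_j N$ forces a coarser, effectively more favorable mixture. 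Formalizing this requires showing the drift condition for positive Harris recurrence is monotone in $k$; Foster–Lyapunov with a linear test function on total occupancy is the natural tool, and the coefficient controlling the negative drift is exactly the quantity appearing on the right of~\eqn{eq:lam_infty} restricted to the subsets realizable at level $k$.

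For (ii), $\Lambda \supseteq \Lambda_1$ is the special case of~\eqn{eq:set_seq} at $k=1$; since $\Lambda$ is the \emph{maximal} stability region (any work-conserving or even non-idling policy cannot exceed mean capacity $\mu\sum_j\gamma_j C_j$), and SQ(2) is such a policy, every $\Lambda_k\subseteq\Lambda$, giving the top of the chain for free. For (iii), the identification of $\Lambda_\infty$ is where the real content lies. I would take the mean-field limit: as $N\to\infty$ the state collapses to the deterministic measure-valued flow, and stability of the limiting system is equivalent to existence of a fixed point with finite mean. Writing $s_j(n)$ for the limiting fraction of class-$j$ servers with at least $n$ jobs, one derives the balance equations and finds that for a subset $\mathcal I\subseteq\mathcal J$ the "class-$\mathcal I$" servers can become unstable in isolation precisely when arrivals that sample two servers both lying in $\mathcal I$ — which happens with probability $\brac{\sum_{j\in\mathcal I}\gamma_j}^2$ in the limit — overload the aggregate class-$\mathcal I$ capacity $\sum_{j\in\mathcal I}\gamma_j C_j$. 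Taking the worst such $\mathcal I$ yields~\eqn{eq:lam_infty}. Necessity (if $\lambda$ exceeds the bound the chain is transient/null) comes from exhibiting the offending subset and showing the corresponding sub-occupancy grows without bound; sufficiency comes from a Lyapunov function built as a weighted sum of per-subset occupancies, or more cleanly from uniqueness of the fixed point established later in the paper together with a verification that the fixed point has finite mean exactly in this range.

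The main obstacle I anticipate is the necessity direction in part (iii) combined with transferring it back to finite $N$: one must show not merely that the \emph{limiting} deterministic system lacks a good fixed point, but that each prelimit Markov chain at $N=kN^*$ is genuinely not positive Harris recurrent for $\lambda$ above the threshold, and that the thresholds $\inf\Lambda_k^c$ decrease to the $\Lambda_\infty$ bound. This requires a lower-bound coupling — dominating the class-$\mathcal I$ subsystem from below by an $M/M/\infty$-like or branching-type process that is itself unstable — and care with the integrality of $\gamma_j N$, since at finite $k$ only certain subsets are "resolvable." I would handle this by noting that as $k\to\infty$ every subset $\mathcal I$ is asymptotically resolvable (the realized fractions converge to $\gamma_j$), so the intersection $\cap_k\Lambda_k$ picks out exactly the minimum over \emph{all} $\mathcal I\subseteq\mathcal J$, matching~\eqn{eq:lam_infty}; the monotone chain~\eqn{eq:set_seq} then guarantees the intersection is this limiting set rather than something strictly larger.
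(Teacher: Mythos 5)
Your proposal diverges substantially from the paper's argument, and as written it has a genuine gap. The paper's entire proof rests on importing, from condition (1.2) of the cited Bramson stability paper, an explicit finite-$N$ characterization of $\Lambda_k$: the system with $N=kN^*$ servers is stable precisely when $\brac{\sum_{i\in\mathcal{B}}C_{(i)}}^{-1}\frac{N\lambda}{\mu}\frac{\binom{|\mathcal{B}|}{2}}{\binom{N}{2}}<1$ for every subset $\mathcal{B}$ of servers. Once this is in hand, everything is elementary set algebra: the nesting \eqn{eq:set_seq} holds because $\mathcal{S}_k\subseteq\mathcal{S}_l$ for $k\le l$ means the level-$l$ condition imposes a superset of the level-$k$ constraints (and $\Lambda\supseteq\Lambda_k$ by taking $\mathcal{B}=\mathcal{S}$); and \eqn{eq:lam_infty} follows by checking that the constraint is maximized over $\mathcal{B}$ by taking \emph{all} servers of the classes in some $\mathcal{I}\subseteq\mathcal{J}$, together with the monotone convergence $\binom{N\alpha}{2}/\binom{N}{2}=\alpha\frac{N\alpha-1}{N-1}\uparrow\alpha^2$. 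Your plan never invokes, or reconstructs, such a necessary-and-sufficient finite-$N$ criterion, and that is where it fails to close.

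Concretely: (a) your necessity argument for part (iii) --- a lower-bound coupling showing each prelimit chain fails to be positive Harris recurrent above the threshold --- is exactly the hard half of the cited stability theorem and is only asserted, not constructed; without it you obtain only one inclusion in \eqn{eq:lam_infty} (via a Lyapunov sufficiency argument), not equality. (b) A mean-field fixed-point analysis cannot by itself identify $\Lambda_\infty=\cap_{k}\Lambda_k$, because that set is an intersection of \emph{finite-$N$} stability regions, not the stability region of the limiting dynamical system; the interchange of ``stability'' with ``$N\to\infty$'' is precisely what would have to be proved, and the paper sidesteps it by having an explicit formula for each $\Lambda_k$. (c) Your heuristic for monotonicity in $k$ is off: at every $N=kN^*$ the class fractions $\gamma_j$ are realized \emph{exactly} (that is how $N^*$ is chosen), so nothing becomes ``more exactly realized'' as $k$ grows; what actually shrinks $\Lambda_k$ is the without-replacement sampling probability $\binom{N\alpha}{2}/\binom{N}{2}$ increasing in $N$ toward $\alpha^2$. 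The block-partition coupling you mention also does not work as stated, since SQ(2) samples pairs across blocks. Your intuition for the \emph{form} of \eqn{eq:lam_infty} --- a subset $\mathcal{I}$ receiving an $\brac{\sum_{j\in\mathcal{I}}\gamma_j}^2$ fraction of arrivals against aggregate capacity $\sum_{j\in\mathcal{I}}\gamma_jC_j$ --- is the correct one, but turning it into a proof requires the finite-$N$ subcriticality criterion that the paper simply cites.
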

 
\begin{proof}
The proof is given in Appendix~\ref{proof:thm_stability}.
\end{proof}

\begin{remark}
From~\eqref{eq:set_seq}, it is clear that
for any finite value of  $N$, the stability region under
Scheme~2 is a subset of that under Scheme~1.  
Further, the stability region under Scheme~2 decreases as $N$ increases keeping the proportions
$\gamma_j$, $j \in \cal{J}$, fixed.
 Hence $ \Lambda_{\infty}$ denotes the region where the system is stable for all $N$. We then show that
in this region the mean field has a unique, globally asymptotically
stable equilibrium point in the space of empirical tail measures that are summable.
\end{remark}

\begin{remark}
\label{rmk:eq_stable}
Under the notation $\nu_j=\lambda/\mu C_j$, it is easy to 
see that
$\lambda < \mu \min_{\mathcal{I}\subseteq \mathcal{J}} \cbrac{\frac{\brac{\sum_{j \in \mathcal{I}} \gamma_j C_j}}{\brac{\sum_{j \in \mathcal{I}} \gamma_j}^2}}$
%
in~\eqn{eq:lam_infty} can be equivalently expressed as 

\begin{equation}
\label{eq:cond_scheme2_2}
\sum_{j \in \mathcal{I}} \frac{\gamma_j}{\nu_j} > \brac{\sum_{j \in \mathcal{I}} \gamma_j}^2 \text{ for all } \mathcal{I}\subseteq \mathcal{J}.
\end{equation}
\end{remark}

\subsubsection{Mean Field Analysis}
Assuming exponential job length distribution (with mean $1/\mu$),
we now characterize the stationary distribution of the system
under the SQ(2) scheme as $N \rightarrow \infty$. To do so we extend the mean field approach of~\cite{Vvedenskaya_inftran_1996, Martin_AAP_1999} 
from the homogeneous scenario to the heterogeneous scenario.

Let $\mathbf{x}_N(t)=\cbrac{x^{(j)}_n(t), 1 \leq j \leq M, n \in \mathbb{Z}_{+}}$
denote the state of the system at time $t$, 
where $x^{(j)}_n(t)=\frac{1}{N \gamma_j} \sum_{n'\geq n} \\ y^{(j)}_{n'}(t)$
and $y^{(j)}_n(t)$ is the number of servers having capacity $C_j$
with exactly $n$ unfinished jobs. Hence, $x^{(j)}_n(t)$
denotes the fraction of servers having capacity $C_j$ with at least
$n$ unfinished jobs. From the Poisson arrival and exponential job size assumptions, for any $N$,
the process $\mathbf{x}_N(t)$ is a Markov process.
The state space of the process $\mf{x}_N(t)$ is given by $\prod_{j=1}^{M}\bar{\cal{U}}^{(j)}_N$, 
where $\bar{\cal{U}}^{(j)}_N$ is defined as
follows:

\begin{equation}
\UN=\{g=(g_n, n \in \mathbb{Z}_{+}): g_0=1, 
g_n \geq g_{n+1} \geq 0, N \gamma_j g_n \in \mathbb{N} \text{ }\forall n \in \mathbb{Z}_{+}\}.
\label{eq:UN}
\end{equation}
We generalize the space $\bar{\cal{U}}^{(j)}_N$ to the space $\Ub$
by removing the last constraint in its definition~\eqn{eq:UN}.
Hence, the space $\Ub$ is defined as follows:

\begin{equation}
\Ub=\{g=(g_n, n \in \mathbb{Z}_{+}): g_0=1, 
g_n \geq g_{n+1} \geq 0 \text{ }\forall n \in \mathbb{Z}_{+}\}.
\label{eq:Ub}
\end{equation}
This space will be required to study the limiting properties of the process $\mf{x}_N(t)$
as $N \rightarrow \infty$. 

We  seek to show the weak convergence of the process
$\mf{x}_N(t)$ as $N \rightarrow \infty$ to the deterministic process 
$\mathbf{u}(t)=\cbrac{u^{(j)}_n(t), n \in \mathbb{Z}_{+},  j \in \cal{J}}$,
governed by the following system of
differential equations that represents the mean field:

\begin{align}
\mathbf{u}(0) &= \mathbf{g},\label{eq:diff1}\\
\dot{\mathbf{u}}(t) &= \mathbf{h}(\mathbf{u}(t)),
\label{eq:diff2}
\end{align} 
where $\mathbf{g} \in \Ub^M$, $\mf{h}(\mf{u})=\cbrac{h^{(j)}_n(\mf{u}), n \in \mathbb{Z}_{+},  j \in \cal{J}}$, 
and for $ j \in \cal{J}$

\begin{align}
h^{(j)}_0(\mathbf{u}) &= 0,\label{eq:diff3}\\
h^{(j)}_n(\mathbf{u}) &= \lambda \brac{u^{(j)}_{n-1}-u^{(j)}_n} \sum_{i=1}^{M} \gamma_i \brac{u^{(i)}_{n-1}+u^{(i)}_n}
-\mu C_j \brac{u^{(j)}_n-u^{(j)}_{n+1}}
\label{eq:diff4}
\end{align}
for all $n \geq 1$.

More specifically, we prove that
if the distribution of $\mf{x}_N(0)$ converges to the Dirac 
measure concentrated at the point $\mf{g} \in \Ub^M$ as $N \rightarrow \infty$,
then the process $\mf{x}_N$ converges weakly
to the deterministic process $\mf{u}$ given by the solution of~\eqn{eq:diff1}-\eqn{eq:diff4}. 
We further show that
under condition~\eqn{eq:cond_scheme2_2}, the system~\eqn{eq:diff1}-\eqn{eq:diff4}
has a unique, globally asymptotically stable
equilibrium point $\mf{P}=\cbrac{\mv{P}{k}{j}, k \in \mathbb{Z}_{+}, 1 \leq j \leq M}$,
obtained by solving the equation $\dot{\mathbf{u}}(t) = \mathbf{h}(\mathbf{u}(t))=0$,
in the space of empirical measures having finite mean.

In the following proposition, we summarize some important properties of the equilibrium point $\mf{P}$
of the system~\eqn{eq:diff1}-\eqn{eq:diff4}.

\begin{proposition}
\label{thm:tail_het_properties}
If there exists a solution $\mf{P}$ of
the equation $\mf{h}(\mf{P})=0$ such that
for each $j \in \cal{J}$, $\mv{P}{0}{j}=1$ and $\mv{P}{k}{j} \downarrow 0$ as $k \rightarrow \infty$, then

i) for each $k \in \mb{Z}_+$ and $j \in \cal{J}$ , 
\begin{equation}
P_{k+1}^{(j)} = \nu_j  \left[\gamma_j \brac{P_k^{(j)}}^2 + P_k^{(j)} \brac{\sum_{\substack{i=1 \\ i \neq j}}^{M} \gamma_i P_k^{(i)}}\right.
\left.+\sum_{\substack{i=1 \\ i \neq j}}^{M} \sum_{l=k}^{\infty} \gamma_i \brac{P_{l+1}^{(i)}P_l^{(j)}-P_l^{(i)}P_{l+1}^{(j)}}\right]. \label{eq:tail_het}
\end{equation}

ii) for each $k \in \mb{Z}_+$ and $j \in \cal{J}$

\begin{equation}
\sum_{j=1}^{M} \frac{\gamma_j}{\nu_j}\mv{P}{k+1}{j}= \brac{\sum_{j=1}^{M}\gamma_j \mv{P}{k}{j}}^2
\label{eq:nice}
\end{equation}

iii) the sequence $\cbrac{\mv{P}{k}{j}, k \in \mb{Z}_+}$ decreases doubly exponentially.
\end{proposition}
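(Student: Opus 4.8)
The plan is to derive all three parts directly from the equilibrium equations $h^{(j)}_n(\mathbf{P})=0$, $n\geq 1$, which by \eqn{eq:diff4} and the notation $\nu_j=\lambda/(\mu C_j)$ read
\[
P^{(j)}_n-P^{(j)}_{n+1}=\nu_j\brac{P^{(j)}_{n-1}-P^{(j)}_n}\sum_{i=1}^{M}\gamma_i\brac{P^{(i)}_{n-1}+P^{(i)}_n},\qquad n\geq 1 .
\]
The only regularity of $\mathbf{P}$ available at the outset is that, for each $j$, the sequence $(P^{(j)}_n)_n$ is nonnegative, nonincreasing and tends to $0$, so that its forward increments $P^{(j)}_{n-1}-P^{(j)}_n\geq 0$ are summable with sum $P^{(j)}_0=1$. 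This is what legitimises the telescoping below; in particular we do \emph{not} assume $\sum_n P^{(j)}_n<\infty$ beforehand --- that comes out of part (iii).

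For part (i), I would sum the displayed identity over $n$ from $k+1$ to $\infty$. The left-hand side telescopes to $P^{(j)}_{k+1}$, and after the change of index $l=n-1$ the right-hand side becomes $\nu_j\sum_{l\geq k}\brac{P^{(j)}_l-P^{(j)}_{l+1}}\sum_{i=1}^M\gamma_i\brac{P^{(i)}_l+P^{(i)}_{l+1}}$, a series dominated termwise by $2\brac{P^{(j)}_l-P^{(j)}_{l+1}}$ and hence absolutely convergent with sum at most $2P^{(j)}_k$. Splitting the inner sum into the diagonal term $i=j$ and the off-diagonal terms $i\neq j$: the $i=j$ summand is $\brac{P^{(j)}_l}^2-\brac{P^{(j)}_{l+1}}^2$, which telescopes to $\brac{P^{(j)}_k}^2$; for $i\neq j$ one uses the elementary identity
\[
\brac{P^{(j)}_l-P^{(j)}_{l+1}}\brac{P^{(i)}_l+P^{(i)}_{l+1}}=\brac{P^{(j)}_lP^{(i)}_l-P^{(j)}_{l+1}P^{(i)}_{l+1}}+\brac{P^{(i)}_{l+1}P^{(j)}_l-P^{(i)}_lP^{(j)}_{l+1}},
\]
whose first bracket telescopes to $P^{(j)}_kP^{(i)}_k$ and whose second bracket is exactly the remainder appearing in \eqn{eq:tail_het}. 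Collecting these contributions and factoring out $\nu_j$ gives \eqn{eq:tail_het}. Note that the remainder series $\sum_{l\geq k}\brac{P^{(i)}_{l+1}P^{(j)}_l-P^{(i)}_lP^{(j)}_{l+1}}$ need not converge absolutely on its own; it is well-defined as the difference of the two convergent series above, i.e.\ as a limit of partial sums.

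Part (ii) can then be obtained either by multiplying \eqn{eq:tail_het} by $\gamma_j/\nu_j$ and summing over $j$ --- the diagonal and the first off-diagonal groups reassemble into $\brac{\sum_j\gamma_jP^{(j)}_k}^2$, while the double sum of remainders vanishes because each of its partial sums is antisymmetric under the exchange $i\leftrightarrow j$ --- or, more transparently, by introducing $S_n=\sum_{j=1}^M\gamma_jP^{(j)}_n$, multiplying the displayed equilibrium relation by $\gamma_j/\nu_j$, and summing over $j$ to get $\sum_j\frac{\gamma_j}{\nu_j}\brac{P^{(j)}_n-P^{(j)}_{n+1}}=(S_{n-1}+S_n)(S_{n-1}-S_n)=S_{n-1}^2-S_n^2$; summing this over $n\geq k+1$ and telescoping both sides gives \eqn{eq:nice} immediately. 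For part (iii), set $\nu_{\max}=\max_{1\leq j\leq M}\nu_j$ and bound $1/\nu_j\geq 1/\nu_{\max}$ termwise in \eqn{eq:nice}: this yields $S_{k+1}\leq\nu_{\max}S_k^2$, i.e.\ $\nu_{\max}S_{k+1}\leq(\nu_{\max}S_k)^2$. Since each $P^{(j)}_k\downarrow 0$ we have $S_k\to 0$, so $\nu_{\max}S_{k_0}<1$ for some $k_0$; iterating the quadratic recursion from $k_0$ gives $\nu_{\max}S_k\leq(\nu_{\max}S_{k_0})^{2^{\,k-k_0}}$ for all $k\geq k_0$, and since $P^{(j)}_k\leq S_k/\gamma_j$ this shows each $(P^{(j)}_k)_k$ decreases doubly exponentially (and in particular is summable).

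The step I expect to require the most care is the convergence/interchange bookkeeping used in part (i) and in the antisymmetry argument for part (ii): because only the increments of $(P^{(j)}_k)_k$ are known to be summable in advance, the remainder series must be handled as differences of telescoping or dominated series rather than as absolutely convergent sums, and the $i\leftrightarrow j$ cancellation must be verified at the level of partial sums before passing to the limit. Once this is set up, the rest is routine algebra on the stationary equations.
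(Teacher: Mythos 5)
Your proof is correct and follows essentially the same route as the paper's: telescoping the stationary recursion $P^{(j)}_n-P^{(j)}_{n+1}=\nu_j(P^{(j)}_{n-1}-P^{(j)}_n)\sum_i\gamma_i(P^{(i)}_{n-1}+P^{(i)}_n)$ over $n\geq k+1$ for (i), deducing (ii) by the weighted sum, and iterating the resulting quadratic recursion for (iii); the paper simply compresses the telescoping into ``upon simplification'' and runs the quadratic iteration on $\tilde P_k=\max_j P^{(j)}_k$ rather than on $S_k=\sum_j\gamma_jP^{(j)}_k$, which is an immaterial difference. Your explicit bookkeeping of the remainder series and the $S_{n-1}^2-S_n^2$ telescope for (ii) is a slightly cleaner write-up of the same argument, not a different proof.
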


\begin{proof}
The proof is given in Appendix~\ref{proof:tail_het_properties}.
\end{proof}

\begin{remark}
A real sequence $\cbrac{z_n}_{n\geq 1}$ is said to decrease doubly
exponentially if and only if there exist positive
constants $L$, $\omega < 1$, $\theta > 1$, and $\kappa$ such that
$z_n \leq \kappa \omega^{\theta^{n}}$ for all $n \geq L$.
Hence, by definition, if a sequence $\cbrac{z_n}_{n\geq 1}$ decays doubly exponentially,
then it is summable, i.e., $\sum_{n=1}^{\infty} z_n < \infty$.
Hence, in view of Proposition~\ref{thm:tail_het_properties}.iii), 
if there exists a solution $\mf{P}$ of
the equation $\mf{h}(\mf{P})=0$ satisfying the hypothesis of Proposition~\ref{thm:tail_het_properties},
then it must be summable.
\end{remark}

Before proving the weak convergence of the Markov process
$\mf{x}_N(t)$ to the deterministic process $\mf{u}(t)$
defined by the systems~\eqn{eq:diff1}-\eqn{eq:diff4},
we need to show that the system  indeed has a unique solution in $\Ub^M$
and there exists a unique equilibrium point $\mf{P}$ of it satisfying
$\sum_{k=1}^{\infty} \mv{P}{k}{j} < \infty$ for each $j \in \cal{J}$.
To do so, it is convenient to define the following space of tail distributions on $\mb{Z}_+$
that has finite first moment.  

\begin{equation}
\U=\{g=(g_n, n \in \mathbb{Z}_{+}): g_0=1, 
g_n \geq g_{n+1} \geq 0 \text{ }\forall n \in \mathbb{Z}_{+}, \sum_{n=0}^{\infty} g_n < \infty\}.
\label{eq:U}
\end{equation}
and the following norm on the spaces $\prod_{j=1}^{M}\UN$, 
$\Ub^M$, and $\U^M$: 

\begin{equation}
\norm{u}=\sup_{1 \leq j \leq M} \sup_{n \in \mathbb{Z}_{+}} \frac{\abs{u^{(j)}_n}}{n+1}.
\label{eq:norm}
\end{equation} 
Note that the space $\Ub^M$ is complete and compact
under the above norm. Henceforth, this norm is understood when we refer to 
convergence or continuity in these spaces. The following proposition
guarantees the existence and uniqueness of solution of the system~\eqn{eq:diff1}-\eqn{eq:diff4}
and its equilibrium point $\mf{P}$. To emphasize the dependence
of the solution of the system~\eqn{eq:diff1}-\eqn{eq:diff4} on the
initial point $\mf{g}$, we shall, at times, denote the solution $\mf{u}(t)$ 
by $\mf{u}(t,\mf{g})$.

\begin{proposition}
\label{thm:diff_eqn}
i) The system~\eqn{eq:diff1}-\eqn{eq:diff4} has a unique solution, $\mf{u}(t, \mf{g})$, 
for all $t \geq 0$, in $\Ub^M$ if $\mf{g} \in \Ub^M$.

ii) Under condition~\eqn{eq:cond_scheme2_2}, there exists a unique equilibrium
point or fixed point $\mf{P}$ of the system~\eqn{eq:diff1}-\eqn{eq:diff4} in the space $\U^M$.
Therefore, $\mf{P}$ satisfies the properties stated in Proposition~\ref{thm:tail_het_properties}.

iii) Under condition~\eqn{eq:cond_scheme2_2},

\begin{equation}
\lim_{t \rightarrow \infty} \mf{u}(t, \mf{g})=\mf{P} \text{ for all } \mf{g} \in \U^M.
\end{equation}
%
%
\end{proposition}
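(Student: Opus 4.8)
The plan is to regard \eqref{eq:diff1}--\eqref{eq:diff4} as an autonomous ODE $\dot{\mf u}=\mf h(\mf u)$ in the Banach space completing $\Ub^{M}$ under the weighted norm \eqref{eq:norm}, and then to exploit the fact that, on $\Ub^{M}$, the vector field $\mf h$ is cooperative. For part (i) the first step is to verify that $\mf h$ is Lipschitz on $\Ub^{M}$ with respect to $\norm{\cdot}$: each $h_{n}^{(j)}$ is a fixed quadratic in the finitely many coordinates $u_{n-1}^{(j)},u_{n}^{(j)},u_{n+1}^{(j)},u_{n-1}^{(i)},u_{n}^{(i)}$ ($i\neq j$), all lying in $[0,1]$ on $\Ub^{M}$, so splitting the products in the usual way bounds $\abs{h_{n}^{(j)}(\mf u)-h_{n}^{(j)}(\mf v)}$ by $K(n+1)\norm{\mf u-\mf v}$ with $K$ depending only on $\lambda,\mu,\max_{j}C_{j},M$; dividing by $n+1$ and taking the supremum gives the Lipschitz bound. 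Extending $\mf h$ to a globally Lipschitz field and invoking the Picard--Lindel\"{o}f theorem gives a unique local solution; since $\Ub^{M}$ is bounded, global existence follows once $\Ub^{M}$ is shown positively invariant, which is checked face by face: $h_{0}^{(j)}\equiv0$ keeps $u_{0}^{(j)}=1$; on $\cbrac{u_{n}^{(j)}=u_{n+1}^{(j)}}$ one has $\dot u_{n}^{(j)}-\dot u_{n+1}^{(j)}=\lambda(u_{n-1}^{(j)}-u_{n}^{(j)})\sum_{i}\gamma_{i}(u_{n-1}^{(i)}+u_{n}^{(i)})+\mu C_{j}(u_{n+1}^{(j)}-u_{n+2}^{(j)})\geq0$, preserving monotonicity; and on $\cbrac{u_{n}^{(j)}=0}$ (hence $u_{n+1}^{(j)}=0$) one has $h_{n}^{(j)}=\lambda u_{n-1}^{(j)}\sum_{i}\gamma_{i}(u_{n-1}^{(i)}+u_{n}^{(i)})\geq0$, preserving nonnegativity.

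For parts (ii)--(iii) the key structural fact is that $\partial h_{n}^{(j)}/\partial u_{m}^{(i)}\geq0$ on $\Ub^{M}$ for every $(m,i)\neq(n,j)$ (the only nonzero off-diagonal partials, those with respect to $u_{n-1}^{(j)}$, $u_{n+1}^{(j)}$ and to $u_{n-1}^{(i)},u_{n}^{(i)}$ for $i\neq j$, are nonnegative because $\Ub^{M}$ enforces $u_{n-1}^{(j)}\geq u_{n}^{(j)}$). Hence by Kamke's theorem the semiflow $\mf g\mapsto\mf u(t,\mf g)$ preserves the coordinatewise order. Let $\mf e\in\U^{M}$ be the minimal state ($e_{0}^{(j)}=1$, $e_{n}^{(j)}=0$ for $n\geq1$). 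Then $\mf e\leq\mf u(s,\mf e)$, so order preservation gives $\mf u(t,\mf e)\leq\mf u(t+s,\mf e)$, i.e.\ each coordinate $u_{n}^{(j)}(t,\mf e)$ is nondecreasing in $t$ and bounded in $[0,1]$; it therefore converges, the pointwise limit $\mf P$ is a $\norm{\cdot}$-limit (the weight $1/(n+1)$ makes the tail uniformly small), and continuity of the time-$s$ map forces $\mf u(s,\mf P)=\mf P$, so $\mf P$ is an equilibrium. Since $\mf e\leq\mf g$ for every $\mf g\in\Ub^{M}$ we also get $\mf P\leq\mf Q$ for every equilibrium $\mf Q$. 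Under \eqref{eq:cond_scheme2_2} one shows (the hard step, discussed below) that $\mf P$ has vanishing tail; then by Proposition~\ref{thm:tail_het_properties}.iii) it decays doubly exponentially, so $\mf P\in\U^{M}$ and, by Proposition~\ref{thm:tail_het_properties}.ii), it satisfies \eqref{eq:nice}. This gives existence in (ii).

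For uniqueness, let $\mf Q\in\U^{M}$ be any equilibrium; then $\mf Q\geq\mf P$, and subtracting \eqref{eq:nice} for $\mf Q$ from \eqref{eq:nice} for $\mf P$ yields $\sum_{j}\frac{\gamma_{j}}{\nu_{j}}\brac{Q_{k+1}^{(j)}-P_{k+1}^{(j)}}=\brac{\sum_{j}\gamma_{j}Q_{k}^{(j)}}^{2}-\brac{\sum_{j}\gamma_{j}P_{k}^{(j)}}^{2}$ for all $k$; at $k=0$ the right side vanishes because $Q_{0}^{(j)}=P_{0}^{(j)}=1$, and since $Q_{1}^{(j)}-P_{1}^{(j)}\geq0$ and $\gamma_{j}/\nu_{j}>0$ we get $Q_{1}^{(j)}=P_{1}^{(j)}$ for all $j$; inducting on $k$ gives $\mf Q=\mf P$, proving (ii). For part (iii), the lower bracket already gives $\liminf_{t\to\infty}u_{n}^{(j)}(t,\mf g)\geq P_{n}^{(j)}$ for every $\mf g\in\U^{M}$. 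For the matching $\limsup$, one uses the a priori bound on the total work $\sum_{j}\gamma_{j}\sum_{n\geq1}u_{n}^{(j)}(t,\mf g)$ (again via \eqref{eq:cond_scheme2_2}, see below) to make the orbit relatively compact in $\norm{\cdot}$, so that its $\omega$-limit set is nonempty and consists of summable equilibria, each of which equals $\mf P$ by the uniqueness just proved; hence $\mf u(t,\mf g)\to\mf P$, the convergence being in $\norm{\cdot}$ since a bounded total work forces uniform tail decay.

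The main obstacle is the claim, used twice, that under \eqref{eq:cond_scheme2_2} the total work stays bounded along trajectories started in $\U^{M}$ --- equivalently, that $\Ub^{M}$ contains no equilibrium with a non-vanishing tail. The difficulty is that the naive drift identity $\frac{d}{dt}\sum_{j}\gamma_{j}\sum_{n\geq1}u_{n}^{(j)}=\lambda-\mu\sum_{j}\gamma_{j}C_{j}u_{1}^{(j)}$ only bounds the growth rate of the total work and is consistent with both a bounded and an unbounded trajectory; to pin it down one must use the full family of subset inequalities $\sum_{j\in\mathcal{I}}\gamma_{j}/\nu_{j}>\brac{\sum_{j\in\mathcal{I}}\gamma_{j}}^{2}$, $\mathcal{I}\subseteq\mathcal{J}$, in \eqref{eq:cond_scheme2_2} --- tracking, as in the proof of Proposition~\ref{thm:stability}, how the mass carried by the servers of an arbitrary subclass $\mathcal{I}$ is drained --- which is exactly the deterministic counterpart of the positive-recurrence argument for the finite-$N$ chains. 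Everything else above (the Lipschitz and invariance checks, cooperativity, the monotone-semiflow limit, and the $\omega$-limit argument) is routine once this stability estimate is available.
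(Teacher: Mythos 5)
Your part (i) is essentially the paper's argument: the paper likewise extends $\mf{h}$ to a globally Lipschitz field (via a truncation $\theta(x)=[\min(x,1)]_+$), obtains the Lipschitz bound in the norm \eqn{eq:norm}, applies Picard iteration, and verifies positive invariance of $\Ub^M$ face by face exactly as you do. For parts (ii)--(iii) you take a genuinely different route: a monotone (cooperative) semiflow argument, with the equilibrium obtained as the increasing limit of $\mf{u}(t,\mf{e})$ from the empty state and uniqueness in $\U^M$ deduced by combining the order relation $\mf{P}\leq\mf{Q}$ with induction on \eqn{eq:nice} starting at $k=0$. The paper instead constructs the equilibrium explicitly (and only for $M=2$): it parameterizes candidate solutions of the fixed-point recursion \eqn{eq:rec} by $\alpha=\mv{P}{1}{1}$, chooses $\mv{P}{1}{2}$ so that \eqn{eq:nice} holds at $k=0$ by construction, and runs a nested intermediate-value argument to locate an $\alpha$ yielding nonnegative decreasing sequences; uniqueness is then deferred to part (iii), which the paper cites from Martin--Suhov. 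Where your argument is complete it is cleaner and more self-contained than the paper's.

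The genuine gap is the one you flag yourself: the claim that under \eqn{eq:cond_scheme2_2} the minimal equilibrium has vanishing tail (equivalently, that the total work stays bounded along trajectories started in $\U^M$) is never proved, and this is precisely the step where the stability condition does its work --- it is the substantive content of part (ii), not a routine estimate. Moreover it cannot be closed by appealing to \eqn{eq:nice} for the minimal equilibrium, because \eqn{eq:nice} is itself derived by summing \eqn{eq:rec} over $l\geq k$ under the hypothesis $\mv{P}{l}{j}\downarrow 0$; for an equilibrium in $\Ub^M$ with limits $\zeta_j=\lim_l \mv{P}{l}{j}$ the same summation only yields $\sum_j\frac{\gamma_j}{\nu_j}\brac{\mv{P}{k+1}{j}-\zeta_j}=\brac{\sum_j\gamma_j\mv{P}{k}{j}}^2-\brac{\sum_j\gamma_j\zeta_j}^2$, which is consistent with $\zeta_j>0$. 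The paper escapes this circularity by forcing \eqn{eq:nice2} to hold at $k=0$ through the choice of $\mv{P}{1}{2}(\alpha)$, propagating it by the recursion, passing to the limit to get the quadratic relation \eqn{eq:lim} for $(\zeta_1,\zeta_2)$, and using \eqn{eq:cond_scheme2_2} to show its only root in $[0,1]^2$ is the origin. Your proposed substitute --- a subclass-by-subclass drain estimate mirroring the proof of Proposition~\ref{thm:stability} --- is plausible but not carried out, and the same missing bound underpins the relative compactness needed for your $\omega$-limit argument in part (iii). Until that estimate is supplied, parts (ii) and (iii) of the proposal are incomplete.
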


\begin{proof}
The proof is given in Appendix~\ref{proof:diff_eqn}.
\end{proof}

Having established the existence and uniqueness of the
equilibrium point of the system~\eqn{eq:diff1}-\eqn{eq:diff4},
we now proceed to establish the weak convergence as $N \rightarrow \infty$ 
of the process $\mf{x}_N(t)$ to the process $\mf{u}(t,\mf{g})$. This is done
by showing that the generator of the process $\mf{x}_N(t)$
converges to the generator of the deterministic 
map $\mf{g} \mapsto \mf{u}(t,\mf{g})$ as $N \rightarrow \infty$~\cite{Ethier_Kurtz_book}. 

For the Markov process $\mf{x}_N(t)$, the generator
$\mathbf{A}_N$ acting on functions 
$f:\prod_{j=1}^{M}\UN \rightarrow \mathbb{R}$ is
defined as $\mathbf{A}_N f(\mathbf{g})=\sum_{\mf{h} \neq \mf{g}} q_{\mf{g}\mf{h}} \brac{f(\mf{h})-f(\mf{g})}$, where $q_{\mf{g}\mf{h}}$, 
with $\mf{g},\mf{h} \in  \prod_{j=1}^{M} \UN$, denotes the transition
rate from state $\mf{g}$ to state $\mf{h}$.

\begin{lemma}
\label{thm:generator} 
Let $\mathbf{g} \in \prod_{j=1}^{M}\UN$
and $\mathbf{e}(n,j)=\brac{e_k^{(i)}}_{k \in \mb{Z}_+, i \in \cal{J}}$
with $e^{(j)}_n=1$ and $e^{(i)}_k=0$
for all $i \neq j$, $k \neq n$. 
The generator
$\mathbf{A}_N$ of the Markov process $\mf{x}_N(t)$
acting on functions $f:\prod_{j=1}^{M}\UN \rightarrow \mathbb{R}$ is
given by

\begin{multline}
\mathbf{A}_N f(\mathbf{g})= \lambda N \sum_{n \geq 1} \sum_{j=1}^{M} \sum_{i=1}^{M} \gamma_i \gamma_j \sbrac{g^{(j)}_{n-1}-g^{(j)}_{n}}
\times \sbrac{g^{(i)}_{n-1}+g^{(i)}_{n}} \sbrac{f(\mathbf{g}+\frac{\mathbf{e}(n,j)}{N \gamma_j})-f(\mathbf{g})} \\
+ \mu N \sum_{n \geq 1} \sum_{j=1}^{M} \gamma_j C_j \sbrac{g^{(j)}_n-g^{(j)}_{n+1}}\sbrac{f(\mathbf{g}-\frac{\mathbf{e}(n,j)}{N \gamma_j})-f(\mathbf{g})}.\label{eq:gen}   
\end{multline}
\end{lemma}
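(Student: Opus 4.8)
The plan is to derive the generator by a direct accounting of all possible jumps of the Markov process $\mf{x}_N(t)$ and their rates, organized into two groups: arrival transitions and departure transitions. Recall that the coordinates $x_n^{(j)}$ record the fraction (within the class-$j$ servers) of servers with at least $n$ jobs, so a single server of capacity $C_j$ moving from $n-1$ to $n$ jobs changes $x_n^{(j)}$ by exactly $+\frac{1}{N\gamma_j}$ and leaves all other coordinates fixed; this is precisely the effect of adding $\frac{\mf{e}(n,j)}{N\gamma_j}$ to the state, which explains the state increments appearing in~\eqn{eq:gen}. A departure at such a server (from $n$ jobs to $n-1$) subtracts the same quantity.

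First I would count arrival transitions. An arrival occurs at rate $N\lambda$, and it triggers a transition ``$n-1 \to n$ at a class-$j$ server'' exactly when SQ(2) picks two servers, at least one of which is a class-$j$ server currently holding exactly $n-1$ jobs, and the job is routed to it. I would condition on the state $\mf{g}$ and compute the probability that, in a uniform sample of two servers, the selected (minimum-occupancy, ties broken fairly) server is a class-$j$ server with exactly $n-1$ jobs. The fraction of class-$j$ servers with exactly $n-1$ jobs is $g^{(j)}_{n-1}-g^{(j)}_n$, contributing a factor $\gamma_j(g^{(j)}_{n-1}-g^{(j)}_n)$ to the probability that a given sampled server is of this type. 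For that server to receive the job, the other sampled server — of class $i$ with probability $\gamma_i$ — must have occupancy $\geq n-1$ (so it is not strictly smaller), i.e.\ probability $g^{(i)}_{n-1}$; and if the other server has exactly $n-1$ jobs as well (a tie, probability-mass $g^{(i)}_{n-1}-g^{(i)}_n$), the job goes to our class-$j$ server only with probability $\tfrac12$. Combining, the per-sampled-pair probability of this transition is $\gamma_j\,\gamma_i\,(g^{(j)}_{n-1}-g^{(j)}_n)\big[g^{(i)}_{n}+\tfrac12(g^{(i)}_{n-1}-g^{(i)}_n)\big]=\tfrac12\gamma_i\gamma_j(g^{(j)}_{n-1}-g^{(j)}_n)(g^{(i)}_{n-1}+g^{(i)}_n)$; summing over which of the two sampled slots is the class-$j$ server doubles this, cancelling the $\tfrac12$, and multiplying by the arrival rate $N\lambda$ and summing over $i$, $j$, $n\geq1$ gives the first line of~\eqn{eq:gen}. (Ignoring the $O(1/N)$ correction from sampling two distinct servers versus sampling with replacement is harmless at the level of the generator, but I would note it; a cleaner route is to observe that for any fixed $N$ the rate is exactly $\lambda N$ times the probability under sampling \emph{without} replacement, and that the stated expression is the sampling-with-replacement form, deferring the negligible discrepancy to the weak-convergence argument where it vanishes — however, since the lemma asserts an exact identity, I would instead carry the exact without-replacement combinatorics and absorb the difference, or, following~\cite{Vvedenskaya_inftran_1996}, simply state the generator in the form that is used in the limit.)

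Next I would count departure transitions. Each class-$j$ server holding exactly $n\geq1$ jobs completes its current ``head'' job at rate $\mu C_j / n$ under PS, but a departure changes the state in the same way regardless of which of its $n$ jobs finishes, and the \emph{number} of class-$j$ servers with exactly $n$ jobs is $N\gamma_j(g^{(j)}_n-g^{(j)}_{n+1})$; more directly, the total rate at which \emph{some} class-$j$ server transitions from $n$ to $n-1$ jobs is the number of class-$j$ servers with $\geq n$ jobs, namely $N\gamma_j g^{(j)}_n$, each contributing departure rate... — here I must be careful: a server with exactly $m\geq n$ jobs contributes to the ``$n\to n-1$'' coordinate change only when $m=n$. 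So the rate of the specific transition ``a class-$j$ server goes from $n$ to $n-1$'' is $\big(\text{number of class-}j\text{ servers with exactly }n\text{ jobs}\big)\times \mu C_j = N\gamma_j(g^{(j)}_n - g^{(j)}_{n+1})\,\mu C_j$, using that a PS server with $n$ jobs has total departure rate $n\cdot(\mu C_j/n)=\mu C_j$. Multiplying by the state increment $f(\mf{g}-\tfrac{\mf{e}(n,j)}{N\gamma_j})-f(\mf{g})$ and summing over $j$ and $n\geq1$ yields the second line of~\eqn{eq:gen}.

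Finally I would assemble: since every jump of $\mf{x}_N(t)$ is one of the above arrival-type or departure-type transitions, and the generator of a pure-jump Markov process is $\mathbf{A}_Nf(\mf{g})=\sum_{\mf{h}\neq\mf{g}}q_{\mf{g}\mf{h}}(f(\mf{h})-f(\mf{g}))$, the two contributions sum to exactly~\eqn{eq:gen}. I expect the main obstacle to be the bookkeeping in the arrival term — correctly handling the tie-breaking rule and the symmetrization over the two sampled slots so that the factors of $\tfrac12$ and $2$ cancel to leave the clean product $[g^{(i)}_{n-1}+g^{(i)}_n]$ — together with being upfront about the sampling-with- versus without-replacement subtlety, which is the one place where the ``exact'' claim needs a word of justification (or a reformulation) rather than a routine computation.
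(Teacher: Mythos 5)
Your proposal is correct and follows essentially the same route as the paper's proof: identify the two transition types (arrival moving a class-$j$ server from $n-1$ to $n$ jobs, departure moving it from $n$ to $n-1$), compute their rates, and substitute into the jump-process generator formula $\mathbf{A}_N f(\mathbf{g})=\sum_{\mathbf{h}\neq\mathbf{g}} q_{\mathbf{g}\mathbf{h}}(f(\mathbf{h})-f(\mathbf{g}))$. You in fact go further than the paper, which simply asserts the two rates without the tie-breaking bookkeeping and without acknowledging that the stated product form corresponds to sampling with replacement (an $O(1/N)$ discrepancy from the exact without-replacement rates that the paper silently ignores and that you correctly flag as immaterial for the subsequent limit).
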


\begin{proof}
The proof follows by noting that the transition rate from the state $\mf{g}$
to the state $\mf{g}-\mf{e}(n,j)/N \gamma_j$, where $n \geq 1$, is given by
$\mu C_j N \gamma_j \sbrac{g^{(j)}(n)-g^{(j)}(n+1)}$. Similarly,
the transition rate from state $\mf{g}$ to the state
$\mf{g}+\mf{e}(n,j)/N \gamma_j$, where $n \geq 1$, is given by
$\lambda N \sbrac{g^{(j)}_{n-1}-g^{(j)}_{n}}\sum_{i=1}^{M}
\gamma_i \gamma_j\sbrac{g^{(i)}_{n-1}+g^{(i)}_{n}}$.
\end{proof}

For $t \geq 0$, the transition semigroup operator 
$\mathbf{T}_N(t)$ generated by the operator
$\mathbf{A}_N$ and acting on functions 
$f:\prod_{j=1}^{M}\UN \rightarrow \mathbb{R}$ is defined by
$\mathbf{T}_N(t) f=\exp\brac{t \mathbf{A}_N }f$.
The following proposition establishes the convergence
of the semigroup $\mf{T}_N(t)$ to the semigroup of the
of the map $\mf{g} \mapsto \mf{u}(t,\mf{g})$.

\begin{proposition}
\label{thm:convergence_semigroup}
For any continuous function $f:\Ub^M \rightarrow \mb{R}$ and $t \geq 0$,

\begin{equation}
\lim_{N \rightarrow \infty} \sup_{\mf{g} \in \Ub^M} \abs{\mf{T}_N(t) f(\mf{g}) - f(\mf{u}(t,\mf{g}))}=0
\end{equation}
and the convergence is uniform in $t$ within any bounded interval.
\end{proposition}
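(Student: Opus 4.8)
The plan is to follow the operator–semigroup route to mean-field limits (\cite{Ethier_Kurtz_book}, Ch.~1, as adapted in \cite{Vvedenskaya_inftran_1996,Martin_AAP_1999}), exploiting two facts already available: the space $\Ub^M$ is compact under the norm~\eqn{eq:norm}, and, by Proposition~\ref{thm:diff_eqn}, the flow $\mf{g}\mapsto\mf{u}(t,\mf{g})$ is well defined on $\Ub^M$, jointly continuous in $(t,\mf{g})$, and Lipschitz in $\mf{g}$ uniformly for $t$ in bounded intervals (a by-product of the Lipschitz continuity of $\mf{h}$ on $\Ub^M$ used to solve~\eqn{eq:diff1}--\eqn{eq:diff4}). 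Introduce the deterministic nonlinear semigroup $\mf{V}(t)$ on $C(\Ub^M)$ by $\mf{V}(t)f(\mf{g})=f(\mf{u}(t,\mf{g}))$; the flow property together with this continuity makes $\{\mf{V}(t)\}_{t\ge 0}$ a strongly continuous semigroup of $\norm{\cdot}_\infty$-contractions, with generator $\mf{A}$ satisfying, for every smooth cylinder function $f$ (one depending on finitely many coordinates $u^{(j)}_n$, $n\le n_0$), $\mf{A}f(\mf{g})=\sum_{j=1}^{M}\sum_{n\ge 1}h^{(j)}_n(\mf{g})\,\partial f(\mf{g})/\partial u^{(j)}_n$, the chain rule applying since the finitely many relevant coordinates of $\mf{u}(\cdot,\mf{g})$ are $C^1$.

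The engine of the proof is the variation-of-parameters identity: for $f$ in $\mathcal{D}(\mf{A})$ and $0\le s\le t$ one has $\tfrac{d}{ds}\bigl[\mf{T}_N(t-s)\mf{V}(s)f\bigr]=\mf{T}_N(t-s)(\mf{A}-\mf{A}_N)\mf{V}(s)f$, hence
\begin{equation}
\mf{V}(t)f-\mf{T}_N(t)f=\int_0^t \mf{T}_N(t-s)\,(\mf{A}-\mf{A}_N)\,\mf{V}(s)f\,ds ,
\end{equation}
and since each $\mf{T}_N(u)$ is a Markov semigroup, hence an $\norm{\cdot}_\infty$-contraction,
\begin{equation}
\sup_{\mf{g}\in\Ub^M}\abs{\mf{T}_N(t)f(\mf{g})-f(\mf{u}(t,\mf{g}))}\;\le\;\int_0^t \norm{(\mf{A}_N-\mf{A})\,\mf{V}(s)f}_\infty\,ds .
\end{equation}
The stated uniformity over bounded time intervals is then automatic, and the whole statement reduces to showing $\norm{(\mf{A}_N-\mf{A})g}_\infty\to 0$ for $g$ in the orbit $\{\mf{V}(s)f:0\le s\le T\}$, plus a uniform-in-$N$ bound so dominated convergence applies under the integral. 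As $\Ub^M$ is compact metric, Lipschitz functions are dense in $C(\Ub^M)$, and combining this with the contraction property of $\mf{V}(t)$ and of every $\mf{T}_N(t)$ it suffices to treat Lipschitz $f$, so that (by Lipschitz dependence of the flow) each $\mf{V}(s)f$, $s\le T$, is Lipschitz with a uniform constant.

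The generator estimate itself is a second-order Taylor expansion of the differences in Lemma~\ref{thm:generator}. Writing $\delta_{n,j}=\mf{e}(n,j)/(N\gamma_j)$, one has $\norm{\delta_{n,j}}=1/\bigl(N\gamma_j(n+1)\bigr)$, so a $K$-Lipschitz $f$ obeys $\abs{f(\mf{g}\pm\delta_{n,j})-f(\mf{g})}\le K/\bigl(N\gamma_j(n+1)\bigr)$; using the telescoping bounds $\sum_{n\ge 1}\bigl(g^{(j)}_{n-1}-g^{(j)}_n\bigr)\le 1$ and $\sum_{n\ge 1}\bigl(g^{(j)}_n-g^{(j)}_{n+1}\bigr)\le 1$, valid for every $\mf{g}\in\Ub^M$, the $O(N)$ rates in~\eqn{eq:gen} are tamed and one gets $\norm{\mf{A}_N f}_\infty\le c\,K$ with $c$ independent of $N$, and the same for $\mf{A}$. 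For a smooth cylinder $f$ with relevant coordinates $n\le n_0$, the expansion $f(\mf{g}\pm\delta_{n,j})-f(\mf{g})=\pm\tfrac{1}{N\gamma_j}\,\partial f(\mf{g})/\partial u^{(j)}_n+O(1/N^2)$ is exact, so multiplying by the rates of~\eqn{eq:gen} and summing the finitely many nonzero terms gives $\mf{A}_N f=\mf{A}f+O(1/N)$ uniformly on $\Ub^M$, with constant depending only on $n_0$, $\sup\abs{\partial f}$ and $\sup\abs{\partial^2 f}$.

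The main obstacle is that the deterministic flow does \emph{not} preserve the class of cylinder functions, so the clean estimate $\mf{A}_Nf=\mf{A}f+O(1/N)$ cannot be applied verbatim to $g=\mf{V}(s)f$ (and, for a merely Lipschitz $f$, $\mf{V}(s)f$ need not lie in $\mathcal{D}(\mf{A})$). I see two standard remedies. First, show that the smooth cylinder functions form a core for $\mf{A}$ — routine given the $C^1$-along-the-flow property above together with density — and then invoke the Trotter–Kurtz semigroup-approximation theorem (\cite{Ethier_Kurtz_book}, Ch.~1): once $\mf{A}_Nf\to\mf{A}f$ holds on a core and the limit $\{\mf{V}(t)\}$ is a strongly continuous contraction semigroup (Proposition~\ref{thm:diff_eqn}), uniform convergence of $\mf{T}_N(t)f$ to $\mf{V}(t)f$ on $C(\Ub^M)$, uniformly for $t$ in bounded intervals, follows, which is exactly the claim. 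Alternatively, for each $n_0$ replace $\mf{u}$ and $\mf{x}_N$ by truncations with the coordinates $n>n_0$ frozen at $0$, apply Kurtz's finite-dimensional density-dependent limit theorem to the truncation, and let $n_0\to\infty$, controlling the truncation error uniformly in $N$ by a priori tail bounds for $\mf{x}_N$ together with the doubly-exponential decay of Proposition~\ref{thm:tail_het_properties}(iii). Either way, the residual points — uniform-in-$(N,s)$ boundedness of $\mf{A}_N\mf{V}(s)f$, Lipschitz regularity of $\mf{h}$ and of the flow, and strong continuity of $\mf{V}(t)$ — are straightforward consequences of the compactness of $\Ub^M$ and Proposition~\ref{thm:diff_eqn}.
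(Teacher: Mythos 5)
Your proposal is correct and follows exactly the route the paper intends: the paper's own ``proof'' of this proposition is a single sentence appealing to smoothness of $f$ and omitting all technical details, and the machinery it implicitly invokes is precisely the Ethier--Kurtz / Trotter--Kurtz generator-convergence argument you spell out (variation of parameters, sup-norm contraction of the Markov semigroups $\mf{T}_N(t)$, second-order Taylor expansion of $\mf{A}_N$ on cylinder functions using the $O(1/(N\gamma_j(n+1)))$ jump sizes and the telescoping bounds, and a core or truncation argument to handle the fact that the flow $\mf{g}\mapsto\mf{u}(t,\mf{g})$ does not preserve cylinder functions). Your sketch in fact supplies considerably more detail than the paper does, and the obstacle you flag --- that $\mf{V}(s)f$ need not lie in the class on which $\mf{A}_N f\to\mf{A}f$ is proved --- together with your two proposed remedies is exactly where the omitted ``technical details'' live.
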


\begin{proof}
The proof follows from the smoothness assumptions  on $f:\Ub^M \rightarrow \mb{R}$.
We omit the technical details.
\end{proof}

From Theorem 2.11 of Chapter 4 of~\cite{Ethier_Kurtz_book},
the above proposition implies that $\mf{x}_N \Rightarrow \mf{u}$ as $N \to \infty$, where
$\Rightarrow$ denotes weak convergence. This implies the weaker result
that $\mf{x}_N(t) \Rightarrow \mf{u}(t)$ for each $t \geq 0$. It also implies
that any limit point of the sequence of invariant measures $\cbrac{\pi_N}$ 
of the processes $\cbrac{\mf{x}_N}$ is an invariant measure of the map $\mf{g} \mapsto \mf{u}(t,\mf{g})$.
We now show that, under condition~\eqn{eq:cond_scheme2_2}, there
is at most one such limit point which is given by the Dirac
measure concentrated at the equilibrium point $\mf{P}\in \U^M$ of 
the system~\eqn{eq:diff1}-\eqn{eq:diff4}.

\begin{proposition}
\label{thm:final_convergence}
Under the condition~\eqn{eq:cond_scheme2_2}, the Markov process $\mf{x}_N(t)$
is positive recurrent for all $N$ and hence has a unique invariant distribution
$\pi_N$ for each $N$. Moreover, $\pi_N \rightarrow \delta_P$ weakly as 
$N \rightarrow \infty$, where $\delta_{\mf{P}}$ is as defined in Proposition~\ref{thm:diff_eqn}, i.e.,

\begin{equation}
\lim_{N \rightarrow \infty} \mb{E}_{\pi_N} f(\mf{g})=f(\mf{P})
\end{equation}
for all continuous functions $f:\Ub^M \rightarrow \mb{R}$.
\end{proposition}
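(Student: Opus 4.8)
The plan is to combine the tightness/convergence machinery already assembled with a uniform Lyapunov argument to pin down the invariant measures. First I would establish positive recurrence of $\mf{x}_N(t)$ for each $N$ under condition~\eqn{eq:cond_scheme2_2}. The natural approach is a Foster--Lyapunov argument: take a linear test function such as $V(\mf{g}) = \sum_{j=1}^{M} \sum_{n \geq 1} w_j\, g^{(j)}_n$ for suitable positive weights $w_j$, apply the generator $\mathbf{A}_N$ from Lemma~\ref{thm:generator}, and show the drift is bounded above by a negative constant outside a finite set. The weights $w_j$ should be chosen guided by Remark~\ref{rmk:eq_stable}, i.e.\ related to $\gamma_j/\nu_j$; the subset condition~\eqn{eq:cond_scheme2_2} over all $\mathcal{I}\subseteq\mathcal{J}$ is exactly what guarantees such weights exist making the drift negative. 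Since the state space is countable and the chain is irreducible on its reachable component, negative drift outside a finite set yields positive recurrence, hence a unique invariant distribution $\pi_N$. (Alternatively, one can quote Proposition~\ref{thm:stability}: condition~\eqn{eq:cond_scheme2_2} places $\lambda$ inside $\Lambda_\infty \subseteq \Lambda_k$ for every $k$, so stability for all $N$ is already in hand, and one only needs uniqueness from irreducibility.)

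Next I would establish tightness of $\{\pi_N\}$ on $\Ub^M$. Since $\Ub^M$ is compact under the norm~\eqn{eq:norm} (as noted in the text), tightness is automatic: every sequence of probability measures on a compact metric space is tight. So $\{\pi_N\}$ has weak limit points. The content is then to identify the limit. Here I would invoke Proposition~\ref{thm:convergence_semigroup} together with Theorem~2.11 of Chapter~4 of~\cite{Ethier_Kurtz_book}, which the excerpt already states gives $\mf{x}_N \Rightarrow \mf{u}$ and, crucially, that any weak limit point $\pi_\infty$ of $\{\pi_N\}$ is invariant for the deterministic semigroup $\mf{g}\mapsto\mf{u}(t,\mf{g})$. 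An invariant measure for a deterministic flow is supported on the set of fixed points of the flow (more precisely, on trajectories; but by Proposition~\ref{thm:diff_eqn}.iii) every trajectory starting in $\U^M$ converges to $\mf{P}$, which forces the support onto $\{\mf{P}\}$ provided one can show $\pi_\infty$ charges $\U^M$, not just $\Ub^M$).

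The main obstacle, and the step I would spend the most care on, is exactly this last point: ruling out escape of mass to the "heavy-tailed" part of $\Ub^M \setminus \U^M$ where $\sum_n g^{(j)}_n = \infty$. The compactness of $\Ub^M$ is a double-edged sword — it gives tightness for free but allows limit points concentrated on non-summable tail configurations, on which the global-stability result of Proposition~\ref{thm:diff_eqn} says nothing. To handle this I would prove a uniform moment bound: using the same Lyapunov function $V$ and the negative-drift estimate from the first step, derive $\sup_N \mb{E}_{\pi_N}\big[\sum_{j,n} w_j g^{(j)}_n\big] \leq K < \infty$ for a constant $K$ independent of $N$. This is a standard consequence of geometric/linear drift (integrate the drift inequality against $\pi_N$, which is stationary, to get $\mb{E}_{\pi_N}[\mathbf{A}_N V] = 0$ and hence an a priori bound on $\mb{E}_{\pi_N}[V]$). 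Such a uniform first-moment bound implies that any weak limit point $\pi_\infty$ also satisfies $\mb{E}_{\pi_\infty}[V] \leq K$, so $\pi_\infty$ is supported on $\U^M$. Combined with Proposition~\ref{thm:diff_eqn}.ii)--iii) (unique fixed point $\mf{P}\in\U^M$, global attractivity on $\U^M$), the only invariant measure on $\U^M$ is $\delta_{\mf{P}}$; hence every limit point equals $\delta_{\mf{P}}$, and since $\Ub^M$ is compact this gives $\pi_N \Rightarrow \delta_{\mf{P}}$, i.e.\ $\lim_N \mb{E}_{\pi_N} f(\mf{g}) = f(\mf{P})$ for all continuous $f$. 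A minor additional check is that the invariant measure of the flow concentrated on $\U^M$ cannot sit on a non-trivial bounded orbit; but since the flow on $\U^M$ has $\mf{P}$ as a global attractor, there are no such orbits, so this is immediate.
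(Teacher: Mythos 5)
Your overall architecture is the right one, and it is essentially the argument the paper delegates to Theorem~4.(ii) of Martin and Suhov: positive recurrence for every $N$ from $\lambda\in\Lambda_\infty\subseteq\Lambda_k$ (your parenthetical alternative is exactly the paper's route; the paper simply invokes Remark~\ref{rmk:eq_stable} and Proposition~\ref{thm:stability}), tightness for free from compactness of $\Ub^M$, identification of any limit point as an invariant measure of the flow via Proposition~\ref{thm:convergence_semigroup}, and then collapse onto $\delta_{\mf{P}}$ using global attractivity on $\U^M$. You also correctly isolate the genuine crux, namely preventing mass from escaping to $\Ub^M\setminus\U^M$, where Proposition~\ref{thm:diff_eqn}.iii) says nothing.

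However, the step you propose for closing that crux does not work as written. For a \emph{linear} test function $V(\mf{g})=\sum_{j}\sum_{n\geq 1}w_j g^{(j)}_n$, the identity $\mb{E}_{\pi_N}[\mathbf{A}_N V]=0$ is a flow-balance relation, not a moment bound: with $w_j=\gamma_j$ one computes $\mathbf{A}_N V(\mf{g})=\lambda-\mu\sum_j\gamma_j C_j g^{(j)}_1$, so stationarity only yields $\mb{E}_{\pi_N}\bigl[\sum_j\gamma_j C_j g^{(j)}_1\bigr]=\lambda/\mu$, which constrains the fraction of busy servers and says nothing about $\mb{E}_{\pi_N}\bigl[\sum_n g^{(j)}_n\bigr]$. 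To get the uniform first-moment bound $\sup_N\mb{E}_{\pi_N}[V]<\infty$ you need either a drift inequality of the form $\mathbf{A}_N V_2\leq b-cV_1$ with $V_2$ quadratic in the occupancies, or the coordinate-wise stationary identities $\mb{E}_{\pi_N}[\mathbf{A}_N g^{(j)}_n]=0$ iterated in $n$ to produce a summable majorant of $\mb{E}_{\pi_N}[g^{(j)}_n]$ uniformly in $N$ (this is what the cited reference does), or a coupling with a dominating stable system. Relatedly, your primary Foster--Lyapunov route to positive recurrence with a linear $V$ runs into the same obstruction --- negativity of the drift requires the busy-server profile to be favourable, which is precisely the subtle point in the heterogeneous setting and the reason the paper leans on the fluid-limit stability criterion of Bramson --- so the citation route is the one to keep. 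With the uniform moment bound repaired, the rest of your argument (invariance of the limit point, dominated convergence along the flow, uniqueness of the fixed point in $\U^M$) goes through.
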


\begin{proof}
The first part of the theorem is a direct consequence of Remark~\ref{rmk:eq_stable} 
following Proposition~\ref{thm:stability}. The weak convergence of the stationary
distributions $\pi_N$ to $\delta_{\mf{P}}$ follows by the arguments
 in Theorem~4.(ii) of~\cite{Martin_AAP_1999} {\em mutatis mutandis}.
\end{proof}

\begin{remark}
The above results establish that the following interchange holds:
\begin{equation}
\label{eq:interchange}
\lim_{t\to \infty} \lim_{N\to \infty} {\mf{x}_N(t)}= \lim_{N\to\infty} \lim_{t\to \infty} {\mf{x}_N(t)}=\mf{P},
\end{equation}
where the limits are in the sense of weak convergence. 

Due to
exchangeability of states among servers having the same capacity, the
above interchange of limits also implies that in the limit as $N \to \infty$
the servers in the system evolve independently of each other \cite{Graham}. More specifically, the tail
distribution of number of pending jobs at time time $t\geq 0$ at a server
of capacity $C_j$ in the limiting system is given by $\cbrac{\mv{u}{n}{j}(t,\mf{g}), n\geq 0}$,
independent of any other server in the system, 
where $\cbrac{\mv{g}{k}{i}, k \geq 0}$, for $i \in \cal{J}$, is the initial tail distribution
of server occupancies at any type $i$ server. Further, the stationary tail distribution
of server occupancies for a type $j$ server in the limiting system
is also independent of all other servers and is given by $\cbrac{\mv{P}{n}{j}, n\geq 0}$.
This property is formally known as {\em propagation of chaos}. 
\end{remark}

\subsubsection{Insensitivity} So far, we have assumed that the job lengths are 
i.i.d exponential random variables. We now show that the 
stationary distribution of the mean field coincides with the stationary
distribution obtained when the queues are independent at equilibrium.
This will imply that 
stationary distribution of server occupancies in the limiting system is insensitive to the job 
length distribution and only depends on their means.

 
\begin{proposition}
\label{thm:dist_het}
Assume that condition~\eqn{eq:cond_scheme2_2} and  
asymptotic independence of queues in equilibrium in the mean field limit, i.e.,
 for any finite set $B$ of servers, 
\begin{equation}
\Pi^{(B)}=\bigotimes_{n \in B} \pi^{(n)}
\label{eq:asymp_indep}
\end{equation}
where $\pi^{(n)}$ and $\Pi^{(B)}$ denote
the marginals of $\Pi$ for the $n^{\textrm{th}}$ server
and for the servers in set $B$, respectively.

Then,  in equilibrium, the arrival process of jobs
at any given server in the limiting system
becomes a state dependent Poisson process whose intensity is given by:
\begin{equation}
\lambda_k =\lambda \sum_{i=1}^{M} \gamma_i \brac{P_k^{(i)}+P_{k+1}^{(i)}},
\label{eq:lambdak_het}
\end{equation} 
where $P_{k}^{(j)}$, for $j \in \cal{J}$
and $k \in \mathbb{Z}_+$, denotes the stationary probability
that a server with capacity $C_j$ has at least $k$ unfinished
jobs. Moreover, we have $P_0^{(j)}=1$, for all $j \in \cal{J}$
and $P_k^{(j)}$, for $k \in \mathbb{Z}_+$ and $j \in \cal{J}$, satisfy~\eqn{eq:tail_het}.

Furthermore, the stationary distribution of a server depends on the job size only through its mean, or the queues are {\em insensitive}.
\end{proposition}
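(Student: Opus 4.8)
\emph{Plan.} I would prove this in three stages: (1) show that, in the $N\to\infty$ equilibrium, the arrival stream seen by a tagged server is a state‑dependent Poisson process and compute its rate; (2) deduce insensitivity by recognizing the tagged server as a processor‑sharing (symmetric) queue; (3) turn the resulting product‑form stationary law into the recursion \eqref{eq:tail_het}. For Stage~1, fix a server of capacity $C_j$ and condition on it holding exactly $k$ jobs. An overall arrival (rate $N\lambda$) picks an unordered pair of servers uniformly; the tagged server is in the pair with probability $\tfrac{N-1}{\binom{N}{2}}=\tfrac{2}{N}$, and given that, the partner is uniform among the other $N-1$. By the assumed asymptotic independence \eqref{eq:asymp_indep}, as $N\to\infty$ the partner is, independently of the tagged server, of capacity $C_i$ with probability $\gamma_i$ and then has at least $m$ jobs with probability $P_m^{(i)}$. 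The tagged server (holding $k$ jobs) receives the job iff the partner has $\ge k+1$ jobs, or has exactly $k$ jobs and the fair coin favours it, so the conditional routing probability is $\sum_{i}\gamma_i\bigl(P_{k+1}^{(i)}+\tfrac12(P_k^{(i)}-P_{k+1}^{(i)})\bigr)=\tfrac12\sum_i\gamma_i(P_k^{(i)}+P_{k+1}^{(i)})$; multiplying by $N\lambda\cdot\tfrac2N$ gives instantaneous rate $\lambda_k=\lambda\sum_i\gamma_i(P_k^{(i)}+P_{k+1}^{(i)})$, which is \eqref{eq:lambdak_het}. Because successive routing decisions become independent in the limit (partners are asymptotically independent of one another and of the tagged server, and sampling without replacement is asymptotically with replacement), a law‑of‑rare‑events/thinning argument applied to the Poisson stream of rate $N\lambda$ gives a genuine Poisson arrival process to the tagged server whose rate is $\lambda_k$ while it holds $k$ jobs; the same $\lambda_k$ appears, for consistency, as the per‑server up‑rate in the mean‑field drift \eqref{eq:diff4}.

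For Stage~2, the tagged server is then a single processor‑sharing station with state‑dependent Poisson$(\lambda_k)$ input, total service rate $C_j$, and i.i.d.\ job sizes of mean $1/\mu$. Such a queue is a symmetric (Kelly‑type) queue, and the stationary distribution of the number of jobs in a symmetric queue is insensitive — it depends on the job‑size law only through its mean; this is exactly the claimed insensitivity, since for exponential jobs the same station is the birth–death chain already underlying Proposition~\ref{thm:tail_het_properties}. Concretely, the level‑crossing balance between $\{0,\dots,k-1\}$ and $\{k,k+1,\dots\}$ yields $\lambda_{k-1}\pi^{(j)}_{k-1}=\mu C_j\,\pi^{(j)}_k$, where $\pi^{(j)}_m:=P^{(j)}_m-P^{(j)}_{m+1}$ is the stationary probability that a type‑$j$ server holds exactly $m$ jobs; hence $\pi^{(j)}_m=\pi^{(j)}_0\prod_{l=1}^{m}\lambda_{l-1}/(\mu C_j)$, and in particular $P_0^{(j)}=\sum_{m\ge0}\pi^{(j)}_m=1$.

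For Stage~3, using $P^{(j)}_{k+1}=\sum_{m\ge k}\pi^{(j)}_{m+1}$, $\pi^{(j)}_{m+1}=(\lambda_m/\mu C_j)\pi^{(j)}_m$, $\pi^{(j)}_m=P^{(j)}_m-P^{(j)}_{m+1}$, and $\lambda_m/\mu C_j=\nu_j\sum_i\gamma_i(P^{(i)}_m+P^{(i)}_{m+1})$ with $\nu_j=\lambda/(\mu C_j)$, one gets
\[
P^{(j)}_{k+1}=\nu_j\sum_{m\ge k}\Bigl(\sum_{i=1}^{M}\gamma_i\bigl(P^{(i)}_m+P^{(i)}_{m+1}\bigr)\Bigr)\bigl(P^{(j)}_m-P^{(j)}_{m+1}\bigr).
\]
Split off $i=j$: the term $\gamma_j\sum_{m\ge k}\bigl((P^{(j)}_m)^2-(P^{(j)}_{m+1})^2\bigr)$ telescopes to $\gamma_j (P^{(j)}_k)^2$ (using $P^{(j)}_m\downarrow 0$). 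For $i\ne j$, write $(P^{(i)}_m+P^{(i)}_{m+1})(P^{(j)}_m-P^{(j)}_{m+1})=\bigl(P^{(i)}_mP^{(j)}_m-P^{(i)}_{m+1}P^{(j)}_{m+1}\bigr)+\bigl(P^{(i)}_{m+1}P^{(j)}_m-P^{(i)}_mP^{(j)}_{m+1}\bigr)$; the first bracket telescopes to $P^{(i)}_kP^{(j)}_k$ and the second is precisely the $\sum_{l=k}^{\infty}$ term of \eqref{eq:tail_het}. Summing over $i$ reproduces \eqref{eq:tail_het} verbatim, which both identifies $\mf{P}$ and (being the same recursion as in Proposition~\ref{thm:tail_het_properties}) confirms that the exponential‑job equilibrium is the general‑job equilibrium.

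\emph{Main obstacle.} Stages~2 and~3 are essentially bookkeeping: Stage~2 is a direct appeal to the standard insensitivity of processor‑sharing queues, and Stage~3 is the telescoping computation above. The delicate point is Stage~1 — making rigorous the passage from the heuristic "thinned Poisson" picture to a bona fide state‑dependent Poisson arrival stream at the tagged server. This requires the asymptotic‑independence hypothesis \eqref{eq:asymp_indep} to decorrelate not just a single pair but successive routing decisions, control of the with/without‑replacement discrepancy as $N\to\infty$, and a self‑consistency argument, since $\lambda_k$ is written in terms of the very stationary tails $P^{(i)}_k$ one is trying to characterize (cf.\ the propagation‑of‑chaos framework \cite{Graham}).
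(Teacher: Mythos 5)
Your proposal is correct and takes essentially the same route as the paper's proof: the $2/N$ pair-selection probability combined with Poisson thinning and the asymptotic-independence hypothesis to get the state-dependent rate $\lambda_k$, followed by the appeal to symmetric-queue (Kelly) insensitivity and the detailed-balance relation $P^{(j)}_{k+1}-P^{(j)}_{k+2}=\frac{\lambda_k}{\mu C_j}\bigl(P^{(j)}_k-P^{(j)}_{k+1}\bigr)$. The only differences are that you make explicit the telescoping computation that the paper dismisses as ``upon further simplification,'' and you candidly flag the Stage-1 rigor gap (decorrelating successive routing decisions), which the paper likewise glosses over by invoking the assumed asymptotic independence.
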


\begin{proof}
Consider any particular server (say server 1) in the system.
Consider the arrivals that have server 1 
as one of its two possible 
destinations. These arrivals constitute the {\em potential arrival 
process} at the server. The probability 
that the server is selected as a potential destination server for
a new arrival is $\brac{1-\frac{\binom{N-1}{2}}{\binom{N}{2}}}=\frac{2}{N}$. 
Thus, from Poisson thinning,
the potential arrival process to a server is a Poisson process with rate 
$\frac{2}{N}\times N\lambda=2\lambda$.

Next, we consider the arrivals that actually join server 1. 
These arrivals constitute the actual arrival process 
at the server. For finite $N$, this process is 
not Poisson since a potential arrival to server 1 actually 
joins server 1 depending on the number of users present at 
the other possible destination server. However, as 
$N \rightarrow \infty$, due to the asymptotic independence 
property stated in~\eqn{eq:asymp_indep}, the numbers of jobs 
present at these two servers become independent of each 
other. As a result, in equilibrium the actual arrival process 
converges to a state dependent Poisson process as $N \rightarrow \infty$.

Consider the potential arrivals at a server 
when the number of users present at the server is $k$. 
This arrival actually joins the server either with 
probability $\frac{1}{2}$ or with probability $1$ 
depending on whether 
the number unfinished jobs at the 
other possible destination server is 
exactly $k$ or greater than $k$, respectively. 
Since a server having capacity $C_j$ is chosen with 
probability $\gamma_j$, 
the total probability that the 
potential arrival joins the server at state $k$ 
is $\sum_{j=1}^{M} \gamma_j \brac{0.5\brac{P_k^{(j)}-P_{k+1}^{(j)}}+P_{k+1}^{(j)}}=0.5\sum_{j=1}^{M} \gamma_j \brac{\mv{P}{k}{j}+\mv{P}{k+1}{j}}$. Therefore,
the rate at which arrivals occur at stake $k$ is given by
$2\lambda \times 0.5\sum_{j=1}^{M} \gamma_j \brac{\mv{P}{k}{j}+\mv{P}{k+1}{j}}$.
This simplifies to~\eqn{eq:lambdak_het}.

Since processor sharing is a symmetric service discipline,
it follows from Theorems 3.10 and 3.14 of~\cite{Kelly_book}
and also from Theorem~4.2 of~\cite{Gupta_Performance_2007}
that the detailed balance equations hold for state dependent Poisson arrivals.
Therefore, we have for $k \in \mb{Z}_+$ and $j \in \cal{J}$ that

\begin{equation}
P_{k+1}^{(j)}-P_{k+2}^{(j)}=\frac{\lambda_k}{\mu C_j} (P_k^{(j)}-P_{k+1}^{(j)}).
\label{eq:temp_het}
\end{equation}
Substituting the value of $\lambda_k$ from~\eqn{eq:lambdak_het} 
into~\eqn{eq:temp_het} and upon further simplification we get~\eqn{eq:tail_het}.
\end{proof}

\begin{remark}
\label{rmk:prop_of_chaos}
Thus, we have shown that under the assumption of asymptotic 
independence of the servers in equilibrium, the 
stationary distribution of server occupancies 
coincides with that of the mean field. 
From the uniqueness of the solution we can 
conclude that asymptotic independence should hold
also for heterogeneous systems.  A direct proof of the asymptotic independence 
is, however, extremely difficult.
The proof remains an open problem even for homogeneous systems and 
any local service discipline~\cite{Bramson_asymp_indep}.
In recent work \cite{Ramanan} propagation of chaos has been established for FCFS systems under general service time distributions.
\end{remark}

\begin{remark}
\label{rmk:rel_freq}
The long run probability that a user joins a server with 
capacity $C_j$ is given by $\frac{N\gamma_j \bar{\lambda}^{(j)}}{N \lambda}$,
where, $\bar{\lambda}^{(j)}= \sum_{k=0}^{\infty} \lambda_k \brac{P_k^{(j)}-P_{k+1}^{(j)}}$ denotes the average arrival rate
to a server having capacity $C_j$.
From~\eqn{eq:lambdak_het} and~\eqn{eq:tail_het},
we obtain that $\frac{\gamma_j \bar{\lambda}^{(j)}}{\lambda}=\frac{\gamma_j P_1^{(j)}}{\nu_j}$ for each
$j \in \cal{J}$. Thus,
the long run probability that a user
joins a server with capacity $C_j$
is $ \frac{\gamma_j P_1^{(j)}}{\nu_j}$.
\end{remark}


\begin{proposition}
The mean sojourn time, $\bar{T}$,
of a job in the heterogeneous system under Scheme~2
is given by

\begin{equation}
\bar{T}=\frac{1}{\lambda}\sum_{j=1}^{M} \sum_{k=1}^{\infty} \gamma_j P_k^{(j)},
\label{eq:soj_time}
\end{equation}
where $P_k^{(j)}$, $k \in \mb{Z}_+$ and $j \in \cal{J}$, are as given in
Proposition~\ref{thm:dist_het}.
\end{proposition}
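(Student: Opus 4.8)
The plan is to obtain~\eqref{eq:soj_time} by a direct application of Little's law in the mean field limit. By the propagation of chaos established above (Proposition~\ref{thm:final_convergence} and the remark following it), in the limit $N\to\infty$ a tagged server of capacity $C_j$ behaves as an isolated processor sharing queue fed by the state dependent Poisson process of rate $\lambda_k$ whenever it holds $k$ jobs, with stationary tail distribution $\cbrac{P_k^{(j)}, k \geq 0}$ characterized in Proposition~\ref{thm:dist_het}. Since $\cbrac{P_k^{(j)}}$ is summable (Proposition~\ref{thm:tail_het_properties}.iii), the stationary mean number of jobs at this server is finite and, using $\expect{X}=\sum_{k\geq 1}\prob{X\geq k}$ for a $\mb{Z}_+$-valued random variable, equals $\sum_{k=1}^{\infty} P_k^{(j)}$.

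Next I would identify the throughput of the tagged server. Its average arrival --- hence departure --- rate is $\bar{\lambda}^{(j)}=\sum_{k=0}^{\infty}\lambda_k\brac{P_k^{(j)}-P_{k+1}^{(j)}}$, and by Remark~\ref{rmk:rel_freq} the fraction of all arriving jobs routed to a type $j$ server is $\gamma_j\bar{\lambda}^{(j)}/\lambda=\gamma_j P_1^{(j)}/\nu_j$; summing over $j$ and invoking~\eqref{eq:nice} at $k=0$ together with $P_0^{(j)}=1$ and $\sum_j\gamma_j=1$ shows these fractions add to one, i.e.\ flow is conserved. Little's law applied to the tagged type $j$ server then gives the mean sojourn time of jobs served there, $\bar{T}^{(j)}=\brac{\sum_{k\geq 1}P_k^{(j)}}/\bar{\lambda}^{(j)}$, and conditioning a randomly chosen job on the type of server it is served by yields
\[
\bar{T}=\sum_{j=1}^{M}\frac{\gamma_j\bar{\lambda}^{(j)}}{\lambda}\,\bar{T}^{(j)}=\frac{1}{\lambda}\sum_{j=1}^{M}\gamma_j\sum_{k=1}^{\infty}P_k^{(j)},
\]
which is~\eqref{eq:soj_time}. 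Equivalently, one may apply Little's law to the whole limiting system at once: the mean number of jobs per server is $\sum_j\gamma_j\sum_{k\geq 1}P_k^{(j)}$ while the arrival rate per server is $\lambda$.

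The only point requiring care is that $\bar{T}$ is the mean sojourn time of the \emph{finite} $N$-server system in the limit, so one must justify $\lim_{N\to\infty}\mb{E}_{\pi_N}\sbrac{\text{occupancy of a type }j\text{ server}}=\sum_{k\geq 1}P_k^{(j)}$. This follows from the weak convergence $\pi_N\Rightarrow\delta_{\mf{P}}$ of Proposition~\ref{thm:final_convergence} combined with uniform integrability of the per-server occupancies, which is supplied by the uniform doubly-exponential tail bound of Proposition~\ref{thm:tail_het_properties}.iii; Little's law itself is a sample-path identity that needs nothing beyond stability, guaranteed by~\eqref{eq:cond_scheme2_2}. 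I expect this uniform-integrability / interchange-of-limits step to be the only genuine obstacle --- the remainder is bookkeeping with the already-established fixed-point relations.
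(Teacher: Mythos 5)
Your proposal is correct and takes essentially the same route as the paper's proof: Little's law applied to each server class to get $\bar{T}_j=\brac{\sum_{k\geq 1}P_k^{(j)}}/\bar{\lambda}^{(j)}$, then averaging over classes with the routing fractions $\gamma_j\bar{\lambda}^{(j)}/\lambda$ from Remark~\ref{rmk:rel_freq}. The paper states this more tersely; your added checks on flow conservation and on uniform integrability for the $N\to\infty$ interchange only make explicit what the paper leaves implicit.
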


\begin{proof}
Let $\bar{T}_j$ denote the mean sojourn time
of a user given that it has joined a server having capacity
$C_j$. Now, the expected number of users at a server
having capacity $C_j$ is given by $\sum_{k=1}^{\infty} P_k^{(j)}$.
Let the average arrival rate at the server
be denoted by $\bar{\lambda}^{(j)}$. Thus, applying Little's formula
we have $\bar{T}_j= \frac{\sum_{k=1}^{\infty} P_k^{(j)}}{\bar{\lambda}^{(j)}}$

%
As discussed in Remark~\ref{rmk:rel_freq}, 
the long run probability that a user joins a server
having capacity $C_j$ is $\frac{\gamma_j\bar{\lambda}^{(j)}}{\lambda}$.
Therefore, the overall mean sojourn time is given by
$\bar{T}=\sum_{j=1}^{M} \frac{\gamma_j\bar{\lambda}^{(j)}}{\lambda} \bar{T}_j=\frac{1}{\lambda}\sum_{j=1}^{M} \sum_{k=1}^{\infty} \gamma_j P_k^{(j)}$.
\end{proof}

\subsection{Scheme~3: The hybrid SQ(2) scheme}
\label{sec:scheme3}

We saw that the classical SQ(2) scheme can have smaller stability region than Scheme~1.
We now show that it is possible to recover the 
stability region of Scheme~1
by using the hybrid SQ(2) scheme.

In the hybrid SQ(2) scheme, for each $j \in \cal{J}$,
a service rate $C_j \in \cal{C}$ is selected for a new arrival
with a probability $p_j$. Hence, the aggregate Poisson arrival rate
to the set of $N \gamma_j$ servers, each having capacity $C_j$, is
$p_j N\lambda$. The system may, therefore, be viewed as being composed
of $M$ parallel homogeneous subsystems each working under
the classical SQ(2) scheme. The the $j^{\textrm{th}}$ ($j \in \cal{J}$)
subsystem has $N \gamma_j$ servers of capacity $C_j$ and the total
input rate at this subsystem is $p_j N \lambda$.
Define $\rho_j=\frac{p_j  \lambda}{\gamma_j \mu C_j}$.
From the results of~\cite{Mitzenmacher_thesis, Vvedenskaya_inftran_1996, Bramson_randomized_load_balancing, Bramson_stability_AAP},
we know that the system is stable if and only if $\rho_j < 1$
for all $j \in \cal{J}$. The necessary and sufficient
condition which guarantees the existence of 
routing probabilities $p_j$, $j \in \cal{J}$ for which the system
is stable is given by the following proposition.

\begin{proposition}
There exists probabilities $p_j$, $j \in \cal{J}$, for
which the system is stable under the hybrid SQ(2) scheme
if and only if $\lambda \in \Lambda$.

\end{proposition}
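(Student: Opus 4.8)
The plan is to exploit the decomposition of the hybrid scheme into $M$ independent homogeneous SQ(2) subsystems already noted above. Subsystem $j$ consists of the $N\gamma_j$ servers of capacity $C_j$ and, by Poisson thinning, receives an independent Poisson input stream of rate $p_j N\lambda$; since servers of distinct classes are never sampled together, the subsystems do not interact, and the whole system is stable (positive Harris recurrent) if and only if every subsystem is stable. By the cited homogeneous results \cite{Mitzenmacher_thesis, Vvedenskaya_inftran_1996, Bramson_randomized_load_balancing, Bramson_stability_AAP}, subsystem $j$ is stable precisely when $\rho_j=\frac{p_j\lambda}{\gamma_j\mu C_j}<1$, i.e.\ when $p_j\lambda<\gamma_j\mu C_j$. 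Hence the proposition reduces to the purely algebraic question of whether the linear feasibility problem
\[
p_j\ge 0,\qquad \sum_{j=1}^M p_j=1,\qquad p_j\lambda<\gamma_j\mu C_j\ \text{ for all }j\in\cal{J}
\]
admits a solution.

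For the necessity direction, suppose such $(p_j)_{j\in\cal{J}}$ exist. Summing the constraints $p_j\lambda<\gamma_j\mu C_j$ over $j$ and using $\sum_j p_j=1$ yields $\lambda<\mu\sum_{j=1}^M\gamma_j C_j$, which is exactly $\lambda\in\Lambda$ (the case $\lambda=0$ being trivial). For the sufficiency direction, given $\lambda\in\Lambda$ I would exhibit an explicit feasible choice, namely the capacity-proportional split $p_j=\frac{\gamma_j C_j}{\sum_{i=1}^M\gamma_i C_i}$, which automatically sets $p_j=0$ for any empty class with $\gamma_j=0$. These $p_j$ are nonnegative and sum to $1$, and $p_j\lambda=\gamma_j C_j\cdot\frac{\lambda}{\sum_i\gamma_i C_i}<\gamma_j\mu C_j$ precisely because $\lambda<\mu\sum_i\gamma_i C_i$; thus every $\rho_j<1$ and, by the decomposition above, the system is stable.

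I do not anticipate a serious obstacle. The only points that need care are (i) justifying rigorously that the class-level static randomization decouples the system into independent subsystems each falling under the homogeneous SQ(2) stability criterion — this is Poisson splitting together with the observation that the SQ(2) sampling is always confined to a single class; and (ii) the minor bookkeeping of excluding classes with $\gamma_j=0$ and ensuring $N$ is large enough that each nonempty class contains at least two servers, which holds for $N$ a multiple of $N^*$. Once these are in place the argument is the short feasibility computation above.
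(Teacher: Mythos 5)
Your proposal is correct and follows essentially the same route as the paper: the same decomposition into $M$ independent homogeneous SQ(2) subsystems with stability criterion $\rho_j<1$, and the same capacity-proportional choice $p_j=\gamma_j C_j/\sum_i\gamma_i C_i$ for sufficiency. Your necessity argument (summing $p_j\lambda<\gamma_j\mu C_j$ over $j$) is just a direct rephrasing of the paper's contradiction argument, and your added remarks on Poisson splitting and degenerate classes are harmless refinements.
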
 

\begin{proof}
Let us assume that~\eqn{eq:cond_scheme1} holds. Now let
$p_i= \frac{\gamma_i C_i}{\sum_{j=1}^{M} \gamma_j C_j}$,
for all $i \in \cal{J}$. Using these values of $p_i$, $i \in \cal{J}$,
we have $\rho_i= \frac{\lambda}{\mu \sum_{j=1}^{M} \gamma_j C_j} < 1$.
%
%
Hence, condition~\eqn{eq:cond_scheme1} is sufficient.

Now let $\frac{\lambda}{\mu \sum_{j=1}^{M} \gamma_j C_j} \geq 1$.
For stability we must have $\rho_i  < 1$ for all $i \in \cal{J}$.
Hence, $\frac{\lambda}{\mu \sum_{j=1}^{M} \rho_j \gamma_j C_j} > 1$
which contradicts the fact that $\sum_{j=1}^{M} p_j=1$ or 
$\frac{\lambda}{\mu \sum_{j=1}^{M} \rho_j \gamma_j C_j} = 1$. 
Hence, condition~\eqn{eq:cond_scheme1}
is necessary.
\end{proof}

\begin{remark}
We have seen
that with the hybrid SQ(2) scheme it is possible 
to recover the stability region as defined in~\eqn{eq:cond_scheme1}. 
The intuition behind the loss of stability region
under the SQ(2) scheme is related to the fact that under uniform sampling, depending on the proportions
of fast and slow servers, one could frequently choose
slower servers even when they are heavily loaded and there are faster 
servers available with less congestion. 
Clearly, a biased 
sampling of the servers is one way to avoid this. 
The hybrid SQ(2) scheme provides the optimal way of choosing the bias.
\end{remark}

Henceforth we will assume that~\eqn{eq:cond_scheme1}
holds. We proceed to find the vector $\mf{p}^*=\cbrac{p_j^*, j\in \cal{J}}$
or equivalently the vector $\bs{\rho}^*=\cbrac{\rho_j^*, j \in \cal{J}}$
that minimizes the mean sojourn time of jobs in the limiting system
under the hybrid SQ(2) scheme. Similar to the SQ(2) scheme, it can be shown that
the mean sojourn time of jobs in the limiting system
under the hybrid SQ(2) scheme is given by 
$\bar{T}=\frac{1}{\lambda}\sum_{j=1}^{M} \sum_{k=1}^{\infty} \gamma_j P_k^{(j)}$,
where $P_k^{(j)}$, $j \in \cal{J}$ and $k \in \mb{Z}_{+}$, denotes
the stationary probability that a server with capacity $C_j$ in the
limiting system has atleast $k$ unfinished jobs under the hybrid SQ(2) scheme.
From the results of~\cite{Mitzenmacher_thesis, Vvedenskaya_inftran_1996, Bramson_randomized_load_balancing} 
it is known that for each $j \in \cal{J}$ and $k \in \mb{Z}_{+}$ we have $\mv{P}{k}{j}= \rho_j^{2^k-1}$.

Therefore, the overall mean sojourn time of jobs is given by $\bar{T}(\bs{\rho})=
\frac{1}{\lambda}\sum_{j=1}^{M} \gamma_j\sum_{k=1}^{\infty} \rho_j^{2^k-1}$.
We now formulate the mean sojourn time minimization problem
in terms of the loads $\rho_j$, $j\in \cal{J}$, as follows:

\begin{equation}
 \begin{aligned}
 & \underset{\bs{\rho}}{\text{Minimize}} & & \frac{1}{\lambda}\sum_{j \in \cal{J}} \gamma_j \sum_{k=1}^{\infty} \rho_j^{2^k-1} \\
 & \text{subject to} & & 0 \leq \rho_j < 1, \text{for all } j \in \cal{J}\\
 &&&  \sum_{j \in \cal{J}} \gamma_j C_j \rho_j = \frac{\lambda}{\mu}.
 \end{aligned}
 \label{opt:actual_osq}
\end{equation}
To characterize the solution of the convex problem 
defined in~\eqn{opt:actual_osq},
we assume without loss of generality that the
server capacities are ordered as follows:

\begin{equation}
C_1 \geq C_2 \geq \ldots \geq C_M
\label{eq:ordering_osq}
\end{equation}
Further, let $\mathcal{J}_{opt} \subseteq \cal{J}$ denote the 
index set of server capacities being
used in the optimal scheme.

\begin{proposition}
\label{thm:solution_osq}
Let $\Phi: \mb{R}_+ \rightarrow [0,1)$
be the inverse of the monotone mapping $\Phi^{-1}: [0,1)\rightarrow \mb{R}_+$
defined as $\Phi^{-1}(\rho)=\sum_{k=1}^{\infty} \brac{2^k-1} \rho^{2^k-2} < \sum_{x=1}^{\infty} x \rho^{x-1} <\infty$
for $0< \rho < 1$.
%
%
Further, for each $j \in \cal{J}$, let 
$\Psi_j: \mb{R}_{+} \rightarrow \mb{R}_{+}$
denote the inverse of the monotone mapping $\Psi_j^{-1}: \mb{R}_{+} \rightarrow \mb{R}_{+}$
defined as $\Psi_j^{-1}(\theta)=\mu \sum_{i=1}^{j} \gamma_i C_i \Phi(\theta C_i)$.
%
%
The index set of server capacities 
used in the hybrid SQ(2) 
scheme is then given by 
$\mathcal{J}_{\text{opt}}=\cbrac{1,2,\ldots,j^*}$,
where $j^*$ is given by

\begin{equation}
j^*=\sup\cbrac{j \in \mathcal{J}: \frac{1}{{C_j}} < \Psi_j(\lambda)}.
\label{eq:optset_osq}
\end{equation}
Moreover, the optimal traffic intensities $\rho_i^*$, for $i \in \cal{J}$ satisfy

\begin{equation}
\rho_i^*=\begin{cases} \Phi(\Psi_{j*}(\lambda)C_i), &\mbox{if } i \in \cal{J}_{\text{opt}}\\
           0, &\mbox{otherwise.}\end{cases}
\label{eq:optrho_osq}
\end{equation}
\end{proposition}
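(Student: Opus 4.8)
The plan is to solve the convex program~\eqn{opt:actual_osq} by Lagrangian / KKT analysis, exploiting the separability of the objective across the index $j$ and the single linear coupling constraint $\sum_j \gamma_j C_j \rho_j = \lambda/\mu$. First I would introduce a multiplier $\theta \ge 0$ for the equality constraint and write the Lagrangian $\mathcal{L}(\bs{\rho},\theta) = \frac{1}{\lambda}\sum_j \gamma_j \sum_{k\ge1}\rho_j^{2^k-1} + \theta\brac{\sum_j \gamma_j C_j \rho_j - \lambda/\mu}$ (dropping the $1/\lambda$ prefactor is harmless). Since the objective is strictly convex and increasing on $[0,1)$ in each $\rho_j$ and the feasible set is convex with nonempty interior (guaranteed by $\lambda \in \Lambda$, via the previous proposition), the KKT conditions are necessary and sufficient for the unique optimum. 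The only subtlety in differentiating term-by-term is justified by the bound $\Phi^{-1}(\rho)=\sum_{k\ge1}(2^k-1)\rho^{2^k-2} \le \sum_{x\ge1} x\rho^{x-1} < \infty$ already recorded in the statement, which also shows $\Phi^{-1}$ is continuous and strictly increasing from $[0,1)$ onto $\mb{R}_+$, hence invertible — this is where the map $\Phi$ comes from.

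Next I would write the stationarity condition. For an index $j$ with $\rho_j^* \in (0,1)$ (interior), $\partial \mathcal{L}/\partial \rho_j = 0$ gives $\gamma_j \sum_{k\ge1}(2^k-1)\rho_j^{2^k-2} + \lambda\theta\,\gamma_j C_j = 0$ after clearing $1/\lambda$; more carefully, with the correct sign, $\sum_{k\ge1}(2^k-1)(\rho_j^*)^{2^k-2} = \theta' C_j$ for a common constant $\theta' = \mu\theta$-type quantity, i.e. $\Phi^{-1}(\rho_j^*) = \theta C_j$, so $\rho_j^* = \Phi(\theta C_j)$. For an index $j$ that is \emph{not} used, $\rho_j^* = 0$ is optimal precisely when the (one-sided) derivative of the objective at $0$ dominates the marginal cost in the constraint, i.e. when $\Phi^{-1}(0^+)=1 \ge \theta C_j$, equivalently $1/C_j \ge \theta$. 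Because of the ordering $C_1 \ge C_2 \ge \cdots \ge C_M$, the condition $\theta C_j \ge 1$ (used servers) versus $\theta C_j \le 1$ (unused) partitions $\cal{J}$ into a prefix $\{1,\dots,j^*\}$ and the rest — this yields the threshold structure $\mathcal{J}_{\text{opt}}=\{1,\dots,j^*\}$.

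It then remains to pin down $\theta$ and hence $j^*$. Substituting $\rho_i^* = \Phi(\theta C_i)$ for $i \le j^*$ into the equality constraint gives $\mu\sum_{i=1}^{j^*}\gamma_i C_i \Phi(\theta C_i) = \lambda$, which is exactly $\Psi_{j^*}^{-1}(\theta) = \lambda$ in the notation of the statement; since $\Psi_j^{-1}$ is continuous and strictly increasing in $\theta$ (each $\Phi$ is), it is invertible, so $\theta = \Psi_{j^*}(\lambda)$, giving~\eqn{eq:optrho_osq}. Finally, $j^*$ is characterized self-consistently: $j$ belongs to the used set iff $\theta C_j \ge 1$, i.e. $\Psi_j(\lambda) C_j \ge 1$, i.e. $1/C_j < \Psi_j(\lambda)$ (strict, as the boundary case $\rho_j^*=0$ is absorbed into ``unused''); combined with monotonicity of $C_j$ this gives $j^* = \sup\{j : 1/C_j < \Psi_j(\lambda)\}$, which is~\eqn{eq:optset_osq}. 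I would close by noting that feasibility ($\rho_i^* < 1$, which holds since $\Phi$ maps into $[0,1)$) and the KKT sufficiency for convex programs confirm this candidate is the global optimum.

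The main obstacle I expect is the bookkeeping around the \emph{threshold / consistency} argument: one must verify that the set $\{j : \Psi_j(\lambda) C_j \ge 1\}$ is genuinely a prefix of $\cal{J}$ and that evaluating $\Psi$ at the index $j^*$ (rather than at $j$) in~\eqn{eq:optrho_osq} is the correct and consistent choice — i.e. that the multiplier $\theta$ computed from the prefix $\{1,\dots,j^*\}$ does not accidentally make some $\rho_i^*$ with $i \le j^*$ nonpositive or some excluded $\rho_i^*$ with $i > j^*$ want to be positive. This is a standard ``water-filling''-type monotonicity verification but requires care with the directions of all the monotone maps $\Phi, \Psi_j$; everything else (differentiation, KKT, invertibility) is routine given the summability bound already stated.
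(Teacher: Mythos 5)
Your proposal is correct and follows essentially the same route as the paper: a KKT analysis of the separable convex program, yielding $\rho_j^*=\Phi(\theta^* C_j)$ when $\theta^* > 1/C_j$ and $\rho_j^*=0$ otherwise, with $\theta^*$ pinned down via the equality constraint as $\Psi_{j^*}(\lambda)$. The paper handles the box constraints with explicit multipliers $\bs{\nu},\bs{\zeta}$ (killing $\bs{\zeta}$ because the objective blows up as $\rho_j\to 1$) rather than your one-sided derivative argument, but this is the same proof.
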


\begin{proof}
The Lagrangian associated with problem~\eqn{opt:actual_osq} is given by

\begin{equation}
\label{eq:lagrangian_osq}
L(\bs{\rho},\bs{\nu},\bs{\zeta},\theta)=\sum_{j=1}^{M} \gamma_j \sum_{k=1}^{\infty} \rho_i^{2^k-1} + \sum_{j=1}^{M} \nu_j (0-\rho_j)
+ \sum_{j=1}^{M} \zeta_j \brac{\rho_j-1}
+\theta\brac{\sum_{j=1}^{M} \gamma_j C_j \rho_j - \frac{\lambda}{\mu}},
\end{equation}
where $\bs{\nu} \geq \mf{0}$, $\bs{\zeta} \geq \mf{0}$, and $\theta \in \mb{R}$.
Since problem~\eqn{opt:actual_osq} is strictly convex and a feasible solution
exits (due to condition~\eqn{eq:cond_scheme1}), by Slater's condition, strong
duality is satisfied. Hence, the primal optimal solution $\bs{\rho}^*$ and the dual
optimal solution $\brac{\bs{\nu}^*, \bs{\zeta}^*, \theta^*}$ 
have zero duality gap if they
satisfy the KKT conditions given as follows:

\begin{align}
&\mf{0} \leq \bs{\rho}^* < \mf{1} \nonumber\\
&\sum_{j=1}^{M} \gamma_j C_j \rho_j^*=\frac{\lambda}{\mu} \nonumber\\
&\theta^* \in \mb{R}, \bs{\nu}^* \geq \mf{0}, \bs{\zeta}^* \geq \mf{0} \nonumber\\
\label{eq:comp_slack}
&\nu_j^* \rho_j^*=0, \zeta_j^*(\rho_j^*-1)=0 \text{ } \forall j \in \mathcal{J}\\
\label{eq:lag_grad}
\gamma_j \sum_{k=1}^{\infty} \brac{2^k-1} &\brac{\rho_j^*}^{2^k-2}
-\theta^* \gamma_j C_j - \nu_j^* + \zeta_j^*=0 \text{ } \forall j \in \mathcal{J} 
\end{align}
Since the objective function tends to infinity as $\rho_j^* \rightarrow 1$, 
for each $j \in \cal{J}$, it follows that necessarily $\bs{\rho}^* < \mf{1}$.
Hence, from~\eqn{eq:comp_slack}, $\bs{\zeta}^*=\mf{0}$. Since $\bs{\nu}^* \geq \mf{0}$,

\begin{equation}
\theta^* \leq \frac{1}{C_j} \sum_{k=1}^{\infty} \brac{2^k-1} (\rho_j^*)^{2^k-2}
\text{  }\forall j \in \cal{J}
\end{equation}
Further, by eliminating $\nu_j^*$ from~\eqn{eq:lag_grad} we obtain

\begin{equation}
\brac{\sum_{k=1}^{\infty} \brac{2^k-1} (\rho_j^*)^{2^k-2}-\theta^* C_j} \rho_j^*=0
\label{eq:inter}
\end{equation}
Thus, if, for some $j \in \cal{J}$, $\theta^* > \frac{1}{C_j}$, then
$\rho_j^* > 0$. Therefore, from~\eqn{eq:inter} and from the definition
of the map $\Phi$ we have $\rho_j^*=\Phi(\theta^*C_j)$.
%
%
If $\theta^* \leq \frac{1}{C_j}$ for some $j \in \cal{J}$, then
$\rho_j^*=0$. Hence, we have

\begin{equation}
\rho_j^*=\begin{cases} \Phi(\theta^*C_j), &\mbox{if } \frac{1}{C_j} < \theta^*\\
           0, &\mbox{otherwise.}\end{cases}
\label{eq:optrho_osq1}
\end{equation}
To find $\theta^*$, we use the equality constraint in~\eqn{opt:actual_osq}.
If the first $j^*$ server capacities are used in the optimal
SQ(2) scheme then 

\begin{equation}
\sum_{j=1}^{j^*} \gamma_j C_j \Phi(\theta^* C_j)= \frac{\lambda}{\mu}
\end{equation} 
Hence by definition of the map $\Psi_j$,

\begin{equation}
\theta^*= \Psi_{j*}(\lambda),
\end{equation}
where $j^*$ is defined as in~\eqn{eq:optset_osq}.
\end{proof}

The optimal routing probabilities $p_j^*$, $j \in \cal{J}$, and the
minimum mean sojourn time $\bar{T}^*$ can be found from
Proposition~\ref{thm:solution_osq} by using the relations $\rho_j^*=\frac{p_j^* \lambda}{\gamma_j \mu C_j}$ and $\bar{T}^*= \frac{1}{\lambda}\sum_{i=1}^{j*} \gamma_i \sum_{k=1}^{\infty} (\rho_i^*)^{2^k-1}$, respectively.

\begin{remark}
One drawback of the hybrid SQ(2) scheme 
is that the arrival rates need to be estimated
to obtain the optimal sampling biases that would minimize
the average delay. However, if one is only interested in maximize 
the stability region, then a much simpler biasing scheme exists in which the
knowledge of the server speeds and their proportions is sufficient. 
Indeed, it is easy to see that choosing the sampling probabilities as:
$p_i=\frac{\gamma_i C_i}{\sum_{j=1}^{M} \gamma_j C_j}$ for each $i \in \cal{J}$
gives $\Lambda$ as the stability region.

Such a sampling bias will not necessarily minimize the average delay.
\end{remark}

\section{Numerical results}
\label{sec:numerics}

In this section, we present simulation results
to compare the different load balancing schemes
considered in this paper. The results also 
indicate the accuracy of the asymptotic analyses
of the SQ(2) and the hybrid SQ(2) schemes in
predicting their performance in a finite system
of servers. We set $\mu=1$ in all our simulations. We also plot the simulation results 
for the SQ(5) scheme whose analysis and characterization is extremely complicated in 
the heterogeneous case but 
can be shown to be superior to the SQ(2) case by coupling arguments. But as argued by \cite{Mitzenmacher_IEEE_2001,Vvedenskaya_inftran_1996} 
most of the gains are achieved 
(super-exponential decay of the tail distributions) by considering 
the SQ(2) scheme - the focus of this paper.

\begin{figure}
 \centering
 \includegraphics[height=0.4\columnwidth,
  keepaspectratio]{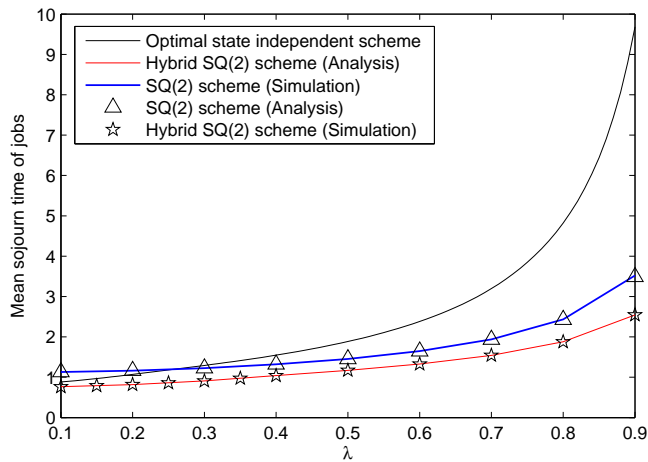}
 \caption{Mean sojourn time jobs as a function of $\lambda$ for $C_1=2/3$, $C_2=4/3$, $N=200$ and $\gamma_1=\gamma_2=1/2$}
 \label{fig:del_lam_het}
 \end{figure}
 
\begin{figure}
 \centering
 \includegraphics[height=0.4\columnwidth,
  keepaspectratio]{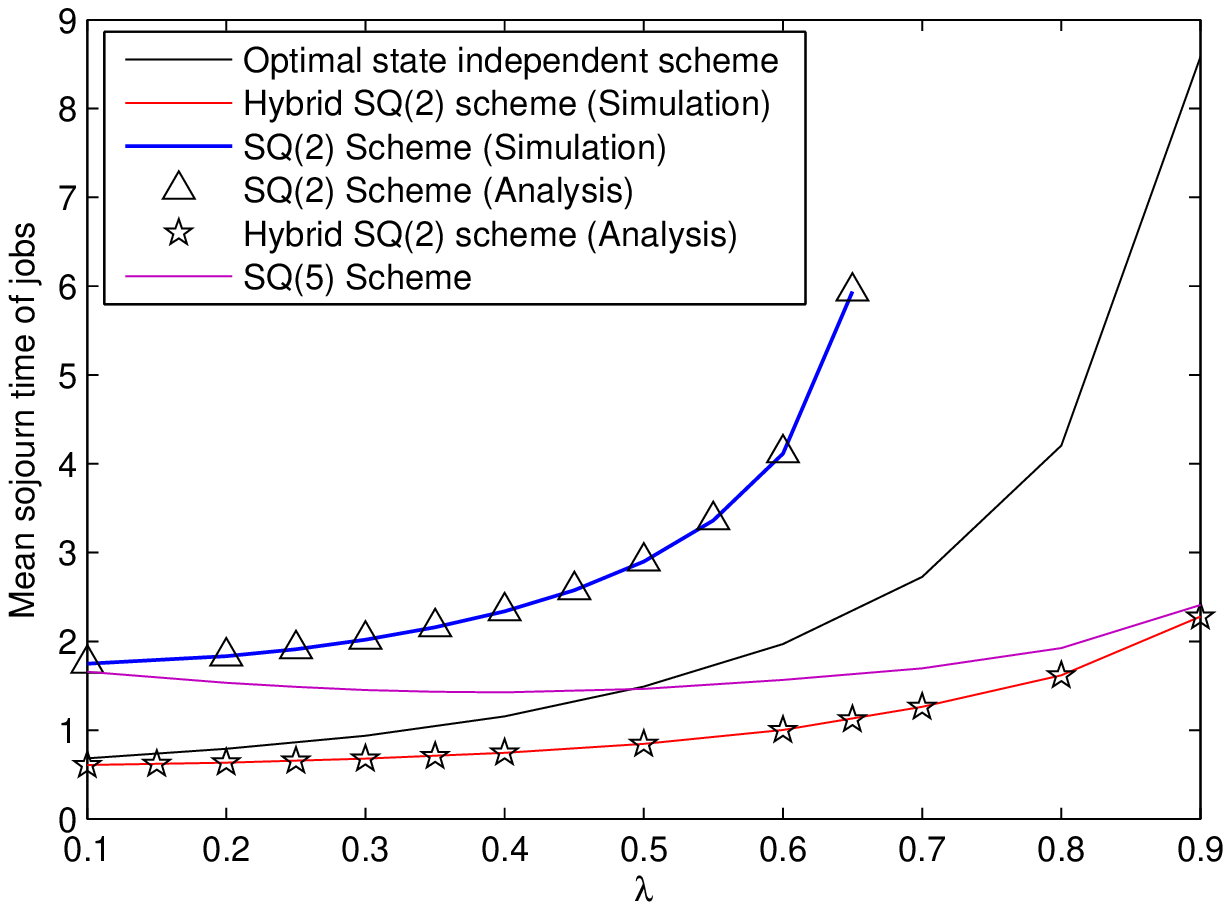}
 \caption{Mean sojourn time jobs as a function of $\lambda$ for $C_1=1/3$, $C_2=5/3$, $N=200$, and $\gamma_1=\gamma_2=1/2$}
 \label{fig:del_lam_het1}
 \end{figure}  
 
\begin{figure}
 \centering
 \includegraphics[height=0.4\columnwidth,
  keepaspectratio]{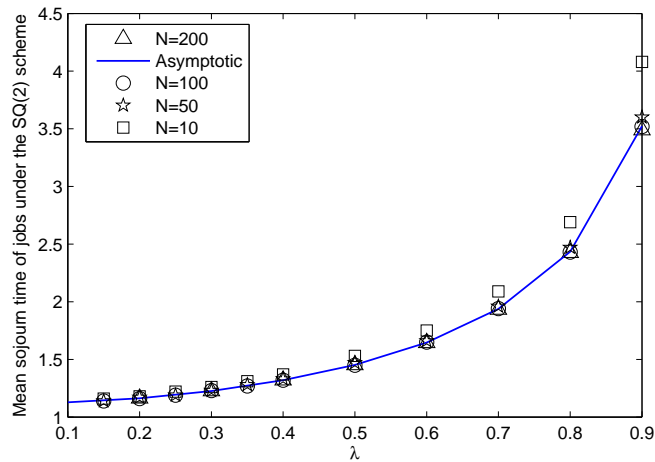}
 \caption{Mean sojourn time jobs under the SQ(2) scheme as a function of $\lambda$ for different values of $N$}
 \label{fig:convergence}
 \end{figure}

We first set 
$C_1=4/3$, $C_2=2/3$, $N=200$
and $\gamma_1=\gamma_2=\frac{1}{2}$. 
Using conditions~\eqn{eq:cond_scheme1},~\eqn{eq:cond_scheme2_2} 
it is found that $\Lambda=\Lambda_{\infty}=\cbrac{0 \leq \lambda < 1}$.
In Figure~\ref{fig:del_lam_het}, we plot the mean sojourn time
jobs in the system as a function of the normalized arrival rate, $\lambda$,
for the three schemes. 
It is observed from the plot that 
the SQ(2) scheme performs better than Scheme~1 for higher values of $\lambda$
and the hybrid SQ(2) scheme results in the least mean sojourn time of jobs among all the three schemes.

The performance of the SQ(2) scheme
may not always be better than that of Scheme~1. To demonstrate this fact
we choose a second set of parameter values as follows:
$C_1=5/3$, $C_2=1/3$, $N=200$, and $\gamma_1=\gamma_2=1/2$. 
Under this parameter setting, we have
$\Lambda=\cbrac{0 \leq \lambda <1}$ and $\Lambda_{\infty}=\cbrac{0 \leq \lambda < 2/3}$. 
Therefore, in this setting, the asymptotic stability region under the SQ(2) scheme is
a strict subset of the stability region under Scheme~1 and the hybrid SQ(2) scheme.
In Figure~\ref{fig:del_lam_het1}, we plot the average response
time of jobs as a function of $\lambda$
for the three schemes and the SQ(5) scheme.
We see that the mean response time of jobs 
is lower in Scheme~1 than that in
the SQ(2) scheme. As in the previous setting, the hybrid SQ(2) scheme
outperforms Scheme~1 and the SQ(2) scheme. 
Furthermore, the SQ(5) scheme outperforms the SQ(2) scheme.

In Figure~\ref{fig:convergence}, we plot
the mean sojourn time of jobs as a function of $\lambda$
for different values of the system size $N$. The plots
are obtained for the first parameter setting where $\Lambda=\Lambda_{\infty}$.
We observe a good match between the asymptotic analysis and the 
simulation results for $N=50,100, 200$. 
The simulation results deviate from the analysis
for $N=10$ where the percentage of deviation is between 5-15\%. 
This leads us to believe that the mean-field results derived in this paper 
can be used to accurately predict the behavior of the
schemes even for moderate number of servers.

The asymptotic insensitivity of 
the SQ(2) scheme.  
is numerically validated in Table~\ref{N_100}, where the the mean sojourn
time of jobs were obtained 
for the parameter setting $C_1=4/3$, $C_2=2/3$, $N=200$
and $\gamma_1=\gamma_2=\frac{1}{2}$.
We chose the following two distributions: 
i) constant, with distribution satisfying $F(x)=0$ for $0\leq x < 1$,
and $F(x)=1$, otherwise.
ii) power law, with distribution satisfying 
$F(x)=1-1/4x^2$ for $x \geq \frac{1}{2}$ and $F(x)=0$, otherwise.
It is seen that there is insignificant change in the mean sojourn
time of jobs when
the job length distribution type is changed.  
The results, therefore, justify the asymptotic independence assumption
stated in~\ref{sec:scheme2}.

\begin{table}[ht]
 \renewcommand{\arraystretch}{1.5}
 \caption{Insensitivity of the SQ(2) Scheme}
 \centering
 \begin{tabular}{c c c c}
   \hline
   \bf{$\lambda$} & \begin{tabular}[c]{@{}c@{}} Mean sojourn time $\bar{T}$\\Theoretical \end{tabular}  & \begin{tabular}[l]{@{}c@{}}Constant\\Simulation\end{tabular} & \begin{tabular}[c]{@{}c@{}}Power Law\\Simulation\end{tabular}\\
   \hline
   0.2 & 1.1614 & 1.1623 & 1.1620\\
   \hline
   0.3 & 1.2257 & 1.2257 & 1.2261\\
   \hline
   0.5 & 1.4547 & 1.4533 & 1.4550\\
   \hline
   0.7 & 1.9375 & 1.9377 & 1.9380\\
   \hline
   0.8 & 2.4265 & 2.4335 & 2.4330\\
   \hline
   0.9 & 3.5300 & 3.5204 & 3.5210\\
   \hline
 \end{tabular}
 \label{N_100}
 \end{table}
 

\section{Conclusion}
\label{sec:conclusions}

In this paper, we considered randomized load balancing
schemes for large heterogeneous processor sharing systems. 
It was shown that, as in the homogeneous case,
the asymptotic stationary tail distribution of loads at each server
decreases doubly exponentially and is insensitive to the type
of job length distribution under the SQ(2) scheme. 
However, unlike the homogeneous case,
in the heterogeneous case, the SQ(2) scheme has a smaller stability region 
than the average capacity of the system.
We have shown that the maximal stability region can be fully recovered
by using a scheme that combines the SQ(2) scheme with the state independent 
scheme and provides the best mean sojourn time 
behaviour among all the schemes considered in the paper.

\appendix

\subsection{Proof of Proposition~\ref{thm:stability}}
\label{proof:thm_stability}

From condition (1.2) of~\cite{Bramson_stability_AAP}  for any finite
value of $N$, the system is stable under the SQ(2) scheme
if the following condition is satisfied:

\begin{equation}
\max_{\cal{B} \subseteq \cal{S}} \cbrac{ {\brac{\sum_{i \in \cal{B}} C_{(i)}}}^{-1} \frac{N \lambda}{\mu} \frac{\binom{\vert \cal{B} \vert}{2}}{\binom{N}{2}}} < 1,
\label{eq:cond1}
\end{equation} 
where $\mathcal{S}=\cbrac{1,2,\ldots,N}$ denotes the index set of servers, $\cal{B} \subseteq S$
is a subset of servers of size at least $2$, 
and $C_{(k)} \in \cal{C}$ denotes the capacity of the $k^{\textrm{th}}$
server in the system. Thus, for $N=kN^*$, the set $\Lambda_k$ is given by

\begin{equation}
\Lambda_k=\cbrac{\lambda > 0 : {\brac{\sum_{i \in \cal{B}} C_{(i)}}}^{-1} \frac{N \lambda}{\mu} \frac{\binom{\vert \cal{B} \vert}{2}}{\binom{N}{2}} < 1 \text{ }\forall \text{ }\mathcal{B} \subseteq \mathcal{S}_k}
\label{eq:set_lam}
\end{equation}
where $\mathcal{S}_k=\cbrac{1,2,\ldots,kN^*}$. Clearly, for integers $l$ and $k$,
with $l \geq k$, we have $\mathcal{S}_k \subseteq \mathcal{S}_l$. Hence, if
$\mathcal{B} \subseteq \mathcal{S}_k$, then $\mathcal{B} \subseteq \mathcal{S}_l$.
Therefore, from~\eqn{eq:set_lam} it is clear that $\lambda \in \Lambda_l$ implies
$\lambda \in \Lambda_k$. Consequently, for $l \geq k$ we have $\Lambda_k \supseteq \Lambda_l$. 
Further, if we set $\mathcal{B}=\mathcal{S}$ in~\eqn{eq:cond1} then we get~\eqn{eq:cond_scheme1}.
Hence, for all $k \in \mb{N}$, $\Lambda \supseteq \Lambda_k$. This proves~\eqn{eq:set_seq}.

To prove~\eqn{eq:lam_infty}, let us consider a finite value of $N$
and a set $\cal{I} \subseteq \cal{J}$. Let $\mathcal{B}_{\mathcal{I}} \subseteq \mathcal{S}$ 
be a subset of servers in which there are $a_i$ ($0 < a_i \leq N\gamma_i$) servers of capacity $C_i$ for each $i \in \cal{I}$.
It can be easily checked that $\frac{\brac{\sum_{i \in \cal{I}} a_i}\brac{\sum_{i \in \cal{I}} a_i-1}}
{\sum_{i\in\cal{I}} a_i C_i}$ is an increasing function in each of the variables $a_i$.
Therefore, we have

\begin{align*}
{\brac{\sum_{i \in \mathcal{B}_\mathcal{I}} C_{(i)}}}^{-1} \frac{N \lambda}{\mu} \frac{\binom{\vert \mathcal{B}_{\mathcal{I}} \vert}{2}}{\binom{N}{2}}
& = \frac{\lambda}{\mu}\frac{\brac{\sum_{i \in \cal{I}} a_i}\brac{\sum_{i \in \cal{I}} a_i-1}}
{\brac{\sum_{i\in\cal{I}} a_i C_i}\brac{N-1}}\\
& \leq \frac{\lambda}{\mu}\frac{\brac{\sum_{i \in \cal{I}}N \gamma_i}\brac{\sum_{i \in \cal{I}} N \gamma_i-1}}
{\brac{\sum_{i\in\cal{I}} N \gamma_i C_i}\brac{N-1}}\\
& \leq \frac{\lambda}{\mu}\frac{\brac{\sum_{i \in \cal{I}}N \gamma_i}\brac{\sum_{i \in \cal{I}} N \gamma_i}}
{\brac{\sum_{i\in\cal{I}} N \gamma_i C_i}\brac{N}}\\
& = \frac{\lambda}{\mu}\frac{\brac{\sum_{i \in \cal{I}} \gamma_i}^2}
{\brac{\sum_{i\in\cal{I}} \gamma_i C_i}}
\end{align*} 
The first equality follows from simplifying the expression on the L.H.S.
The second inequality follows from the first since we have
$\frac{N\alpha-1}{N-1}\leq \frac{N\alpha}{N}=\alpha$ 
for $\alpha \leq 1$.
Hence, $\lambda \in \Lambda_{\infty}$ implies
${\brac{\sum_{i \in \mathcal{B}_\mathcal{I}} C_{(i)}}}^{-1} \frac{N \lambda}{\mu} \frac{\binom{\vert \mathcal{B}_{\mathcal{I}} \vert}{2}}{\binom{N}{2}} < 1$.
As this is true for any $\cal{I} \subseteq \cal{J}$ and any $N$,
we have that $\Lambda_{\infty} \subseteq \Lambda_k$ for all $k \in \mb{N}$.
Hence, $\Lambda_{\infty} \subseteq \cap_{k=1}^{\infty} \Lambda_k$. To prove the
reverse inclusion, consider $\lambda \in \cap_{k=1}^{\infty} \Lambda_k$. 
For $\cal{I} \subseteq \cal{J}$, consider a set $\mathcal{B}_{\mathcal{I}}^{(N)}$
which contains all the $N \gamma_i$ servers of capacity $C_i$ for each $i \in \cal{I}$. 
Since $\lambda \in \Lambda_k$ for all $k \in \mb{N}$, we have $\lim_{N \rightarrow \infty}{\brac{\sum_{i \in \mathcal{B}_\mathcal{I}} C_{(i)}}}^{-1} \frac{N \lambda}{\mu} \frac{\binom{\vert \mathcal{B}_{\mathcal{I}}^{(N)} \vert}{2}}{\binom{N}{2}}<1$, which
is equivalent to the condition $\frac{\lambda}{\mu}\frac{\brac{\sum_{i \in \cal{I}} \gamma_i}^2}
{\brac{\sum_{i\in\cal{I}} \gamma_i C_i}} < 1$. As this is true for all $\cal{I} \subseteq \cal{J}$,
we have $\lambda \in \Lambda_{\infty}$. Hence, $\Lambda_{\infty}=\cap_{k=1}^{\infty} \Lambda_k$ as required.

\subsection{Proof of Proposition~\ref{thm:tail_het_properties}}
\label{proof:tail_het_properties}
i) Let $\mf{P}$ satisfy the hypothesis of the proposition. Hence, from~\eqn{eq:diff4},
we have that, for each $l \in \mb{Z}_+$ and $j \in \cal{J}$,
\begin{equation}
\mv{P}{l+1}{j}-\mv{P}{l+2}{j}=\nu_j \brac{\mv{P}{l}{j}-\mv{P}{l+1}{j}}
\sum_{i=1}^{M} \gamma_i \brac{\mv{P}{l}{i}+\mv{P}{l+1}{i}}.
\label{eq:rec}
\end{equation}
Since by hypothesis $\mv{P}{l}{j} \rightarrow 0$ as $l \rightarrow \infty$,
adding the above equations for $l \geq k$ yields~\eqn{eq:tail_het} upon simplification.

ii) Equation~\eqn{eq:nice} is a direct consequence of~\eqn{eq:tail_het}.

iii) From~\eqn{eq:nice} we obtain
$\frac{\gamma_j\mv{P}{k+1}{j}}{\nu_j} \leq \brac{\sum_{j=1}^{M}\gamma_j \mv{P}{k}{j}}^2
\leq \brac{\tilde{P}_k}^2$,
%
where $\tilde{P}_k=\max_{1 \leq j \leq M} \mv{P}{k}{j}$. Thus,
we have $\mv{P}{k+1}{j} \leq \delta \tilde{P}_k$, where $\delta=\tilde{P}_k \max_{1 \leq j \leq M} (\nu_j/\gamma_j)$. 
Since by hypothesis, for each $j$, $\mv{P}{k}{j} \rightarrow 0$ 
as $k \rightarrow \infty$, one can choose $k$ sufficiently large such that $\delta < 1$. Hence, we have $\brac{\max_{1 \leq j \leq M} \mv{P}{k+1}{j}} \leq \delta \tilde{P}_k$.
Similarly we have, $\brac{\max_{1 \leq j \leq M} \mv{P}{k+n}{j}} \leq \delta^{2^n-1} \tilde{P}_k$. This proves that the sequence $\cbrac{\mv{P}{k}{j}, k \in \mb{Z}_+}$
decreases doubly exponentially for each $j$.

\subsection{Proof of Proposition~\ref{thm:diff_eqn}}
\label{proof:diff_eqn}

i) Define $\theta(x)=[\min(x,1)]_{+}$, where $[z]_{+}=\max(0,z)$.
Now, we consider the following modification of~\eqn{eq:diff1}-\eqn{eq:diff4}.

\begin{align}
\mathbf{u}(0) &= \mathbf{g},\label{eq:diff1'}\\
\dot{\mathbf{u}}(t) &= \tilde{\mathbf{h}}(\mathbf{u}(t)),
\label{eq:diff2'}
\end{align} 
where for $1 \leq j \leq M$,

\begin{align}
\tilde{h}^{(j)}_0(\mathbf{u}) &= 0,\label{eq:diff3'}\\
\tilde{h}^{(j)}_n(\mathbf{u}) &= \lambda \sbrac{\theta(u^{(j)}_{n-1})-\theta(u^{(j)}_n)}_{+} \sum_{i=1}^{M} \gamma_i \sbrac{\theta(u^{(i)}_{n-1})+\theta(u^{(i)}_n)}-\mu C_j \sbrac{\theta(u^{(j)}_n)-\theta(u^{(j)}_{n+1})}_{+}
\label{eq:diff4'}
\end{align}
for all $n \geq 1$. Note that the right hand side of~\eqn{eq:diff4} and~\eqn{eq:diff4'}
are equal if $\mathbf{u} \in \Ub^M$. Therefore, the two systems have the same
solution in $\Ub^M$. Also if $\mf{g} \in \Ub^M$, then any solution of
the modified system remains within $\Ub^M$.
This is because of the facts that if $u^{(j)}_n(t)=u^{(j)}_{n+1}(t)$ for
some $j$, $n$, $t$, then $h^{(j)}_n(\mf{u}(t)) \geq 0$ and 
$h^{(j)}_{n+1}(\mf{u}(t)) \leq 0$, and if $\mv{u}{n}{j}(t)=0$
for some $j$, $n$, $t$, then $\mv{h}{n}{j}(\mf{u}) \geq 0$.
Hence, to prove the uniqueness of solution of
\eqn{eq:diff1}-\eqn{eq:diff4}, we need to show
that the modified system~\eqn{eq:diff1'}-\eqn{eq:diff4'}
has a unique solution in $(\mb{R}^{\mb{Z}_{+}})^{M}$.

Using the norm defined in~\eqn{eq:norm} and the facts
that $\abs{x_+-y_+} \leq \abs{x-y}$ for any $x,y \in \mb{R}$,
$\abs{a_1b_1-a_2b_2} \leq \abs{a_1-a_2}+\abs{b_1-b_2}$ for any 
$a_1,a_2,b_1,b_2 \in [0,1]$, and $\abs{\theta(x)-\theta(y)} \leq \abs{x-y}$
for any $x,y \in \mb{R}$ we obtain

\begin{align}
\norm{\mf{\tilde{h}}(\mf{u})} &\leq K_1 \label{eq:bound}\\
\norm{\mf{\tilde{h}}(\mf{u}_1)-\mf{\tilde{h}}(\mf{u}_2)} &\leq K_2 \norm{\mf{u_1}-\mf{u_2}},
\label{eq:lip}
\end{align}
where $K_1$ and $K_2$ are constants defined as
$K_1=2\lambda+\mu (\max_{1 \leq j \leq M} C_j)$
and $K_2= 8\lambda+ 2 \mu (\max_{1 \leq j \leq M} C_j)$.
The uniqueness follows from inequalities~\eqn{eq:bound}
and~\eqn{eq:lip} by using Picard's successive approximation
technique since $\Ub^M$ is complete under the norm defined in~\eqn{eq:norm}.

ii) For ease of exposition we provide a proof for the $M=2$ case. 
The proof can be extended to any $M \geq 2$.

We note that if there exists $\mf{P} \in \Ub^M$ such that the sequences $\cbrac{\mv{P}{l}{1}, l\in \mb{Z}_+}$
and $\cbrac{\mv{P}{l}{2}, l\in \mb{Z}_+}$ satisfy the recursive relation~\eqn{eq:rec}
for all $l \in \mb{Z}_+$, $j=1,2$, then 
it must be an equilibrium point of the system~\eqn{eq:diff1}-\eqn{eq:diff4}.
Moreover, if $\mv{P}{l}{1}, \mv{P}{l}{2} \downarrow 0$
as $l \rightarrow \infty$, then by Proposition~\ref{thm:tail_het_properties}.iii), such $\mf{P}$ must also lie in the space $\U^M$.
We now proceed to prove that such $\mf{P}$ exists.

We construct the sequences $\cbrac{\mv{P}{l}{1}(\alpha), l\in \mb{Z}_+}$
and $\cbrac{\mv{P}{l}{2}(\alpha), l\in \mb{Z}_+}$ as functions of the
real variable $\alpha$ as follows: $\mv{P}{0}{1}(\alpha)=\mv{P}{0}{2}(\alpha)=1$,
$\mv{P}{1}{1}(\alpha)=\alpha$, $\mv{P}{1}{2}(\alpha)=\frac{\nu_2}{\gamma_2}\brac{1-\frac{\gamma_1}{\nu_1}\alpha}$,
and for $l \geq 0$ and $j=1,2$ the following recursive relationship holds

\begin{equation}
\mv{P}{l+2}{j}(\alpha)=\mv{P}{l+1}{j}(\alpha)-\nu_j\brac{\mv{P}{l}{j}(\alpha)-\mv{P}{l+1}{j}(\alpha)} \brac{\sum_{i=1}^{2}\gamma_i\brac{\mv{P}{l}{i}(\alpha)+\mv{P}{l+1}{i}(\alpha)}}.
\label{eq:rec2}
\end{equation}
Note that the above relation is same as~\eqref{eq:rec}. We show that 
there exists some value of $\alpha$, such that both $\cbrac{\mv{P}{l}{1}(\alpha), l\in \mb{Z}_+}$
and $\cbrac{\mv{P}{l}{2}(\alpha), l\in \mb{Z}_+}$ are non-negative, decreasing sequences
(monotonicity will follow from non-negativity by virtue of~\eqref{eq:rec2}). 
It can be shown
from the relations $\frac{\gamma_1}{\nu_1}\mv{P}{1}{1}(\alpha)+\frac{\gamma_2}{\nu_2}\mv{P}{1}{2}(\alpha)=1$
and~\eqn{eq:rec2} that

\begin{equation}
\sum_{j=1}^{2} \frac{\gamma_j}{\nu_j} \mv{P}{l+1}{j}(\alpha)= \brac{\sum_{j=1}^{2} \gamma_j \mv{P}{l}{j}(\alpha)}^2 \text{ for } l \geq 0
\label{eq:nice2}
\end{equation}
From condition~\eqref{eq:cond_scheme2_2} and the construction above we see
that for $\alpha \in \brac{\max\brac{0,\frac{\nu_1}{\gamma_1}\brac{1-\frac{\gamma_2}{\nu_2}}},
\min\brac{1,\frac{\nu_1}{\gamma_1}}}$ we have $1=\mv{P}{0}{j}(\alpha) > \mv{P}{1}{j}(\alpha) >0$
for $j=1,2$. By using~\eqref{eq:rec2} for $l=2$ and $j=1$, we have that $\mv{P}{2}{1}(\alpha) < 0$
for $\alpha=0$ and $\mv{P}{2}{1}(\alpha) > 0$ for $\alpha=1,\frac{\nu_1}{\gamma_1}$.
Hence there must exist at least one root of $\mv{P}{2}{1}(\alpha)$ in $\brac{0, \min \brac{1,\frac{\nu_1}{\gamma_1}}}$.
Let the maximum of these roots be $r_1$. Therefore, if $\alpha \in \brac{\max\brac{r_1,\frac{\nu_1}{\gamma_1}\brac{1-\frac{\gamma_2}{\nu_2}}},\min\brac{1,\frac{\nu_1}{\gamma_1}}}$ then $1=\mv{P}{0}{1}(\alpha) > \mv{P}{1}{1}(\alpha) > \mv{P}{2}{1}(\alpha) >0$.
Similarly, for $\alpha=r_1,\frac{\nu_1}{\gamma_1}\brac{1-\frac{\gamma_2}{\nu_2}}$, we have $\mv{P}{2}{2}(\alpha) > 0$ and for $\alpha=\frac{\nu_1}{\gamma_1}$
we have $\mv{P}{2}{2}(\alpha) < 0$. Therefore, there must exist a root 
of $\mv{P}{2}{2}(\alpha)$ in $\alpha \in \brac{\max\brac{r_1,\frac{\nu_1}{\gamma_1}\brac{1-\frac{\gamma_2}{\nu_2}}},\frac{\nu_1}{\gamma_1}}$. If we denote the minimum of these roots by $r_2$, 
then for $\alpha \in \brac{\max\brac{r_1,\frac{\nu_1}{\gamma_1}\brac{1-\frac{\gamma_2}{\nu_2}}},
\min\brac{r_2, 1}}$ we get $1=\mv{P}{0}{j}(\alpha) > \mv{P}{1}{j}(\alpha) > \mv{P}{2}{j}(\alpha) >0$ for $j=1,2$. Continuing in this way we can always get a range of $\alpha \in 
\brac{\max\brac{r_{2k+1},\frac{\nu_1}{\gamma_1}\brac{1-\frac{\gamma_2}{\nu_2}}},
\min\brac{r_{2k+2}, 1}}$ such that $1=\mv{P}{0}{j}(\alpha) > \mv{P}{1}{j}(\alpha) > \mv{P}{2}{j}(\alpha) > \ldots > \mv{P}{k+2}{j}(\alpha)>0$ for $j=1,2$. 
Hence, there exists a value of $\alpha$ for which the sequences $\cbrac{\mv{P}{l}{1}(\alpha), l\in \mb{Z}_+}$
and $\cbrac{\mv{P}{l}{2}(\alpha), l\in \mb{Z}_+}$ are non-negative, monotonically decreasing sequences
in $[0,1]$ starting at $1$ satisfying~\eqref{eq:rec}. In other words there exists $\alpha$
for which $\mf{P}(\alpha)=\cbrac{\mv{P}{l}{j}(\alpha), l\in\mb{Z}_+, j=1,2}$ is in $\Ub^M$
and is an equilibrium point of the system~\eqref{eq:diff1}-\eqref{eq:diff4}. We now prove that
for such $\mf{P}(\alpha)$, $\mv{P}{l}{j}(\alpha) \rightarrow 0$ as $l \rightarrow \infty$ for $j=1,2$. 

We have seen that there exists a value of $\alpha$ such that
the sequences $\cbrac{\mv{P}{l}{1}(\alpha), l\in \mb{Z}_+}$
and $\cbrac{\mv{P}{l}{2}(\alpha), l\in \mb{Z}_+}$
are non-negative and monotonically decreasing sequences
in $[0,1]$ starting at $1$. Hence, by monotone convergence theorem,
both these sequences must converge in $[0,1]$.
Let $\mv{P}{l}{1}(\alpha) \rightarrow \zeta_1 \in [0,1]$ and $\mv{P}{l}{2}(\alpha) \rightarrow \zeta_2\in [0,1]$
as $l \rightarrow \infty$. Hence, by taking limit as $l \rightarrow \infty$
on both sides of relation~\eqref{eq:nice2}, we obtain

\begin{equation}
\sum_{j=1}^{2} \frac{\gamma_j}{\nu_j} \zeta_j= \brac{\sum_{j=1}^{2} \gamma_j \zeta_j}^2 
\label{eq:lim}
\end{equation}
Expressing the above equation as a quadratic equation $q(\zeta_1)$ in $\zeta_1$
we see that that $q(0)=\gamma_1\zeta_2 \brac{\gamma_2\zeta_2-\frac{1}{\nu_2}} < 0$
for $0< \zeta_2 \leq 1$ since by~\eqref{eq:cond_scheme2_2} $\gamma_2 \nu_2 <1$. 
Further, $q(1)=\gamma_2^2 \zeta_2^2+\brac{2\gamma_1\gamma_2-\frac{\gamma_2}{\nu_2}}\zeta_2
+\brac{\gamma_1^2-\frac{\gamma_1}{\nu_1}}$. By using the stability condition~\eqref{eq:cond_scheme2_2}
it can be easily shown that $q(1) < 0$ if $0< \zeta_2 \leq 1$. Hence, either both roots 
or no roots of $q(\zeta_1)=0$ must lie in $[0,1]$. Now, since the product of the roots
of $q(\zeta_1)=0$ is $q(0)/\gamma_1^2 <0$ for $0 < \zeta_2 \leq 1$ we conclude
that there is no root of $q(\zeta_1)=0$ in $[0,1]$ if $0 < \zeta_2 \leq 1$.
Hence, $\zeta_2=0$. For $\zeta_2=0$, 
the only solution of $q(\zeta_1)=0$ in $[0,1]$ is $\zeta_1=0$.
Therefore, we conclude $\zeta_1=\zeta_2=0$. Therefore, there exists a value
of $\alpha$ such that $\mv{P}{l}{1}(\alpha), \mv{P}{l}{2}(\alpha) \downarrow 0$ as $l \rightarrow \infty$.
Thus, there exists $\alpha$ such that $\mf{P}(\alpha) \in \U^M$ and is an equilibrium point of~\eqn{eq:diff1}-\eqn{eq:diff4}.
The uniqueness will follow from part (iii) of the proposition due to uniqueness of the limit.

iii) The proof is similar to the proof of Theorem 1.(iii) of~\cite{Martin_AAP_1999} and hence
is omitted to conserve space.

\bibliographystyle{ieeetr}

\bibliography{load_balance}

\begin{IEEEbiographynophoto}
{\bf Arpan Mukhopadhyay} received his Bachelors of Engineering  (B.E) degree in Electronics and Telecommunication Engineering from Jadavpur University, Culcutta, India in 2009, and his Master of Engineering (M.E) degree in Telecommunication from Indian Institute of Science, Bangalore, India in 2011. 

He is currently pursuing his Ph.D in Electrical and Computer Engineering at the University of Waterloo, Canada. His current areas of research are broadly stochastic modeling and analysis of large distributed networks, queuing theory, and network resource allocation and optimization algorithms. 
\end{IEEEbiographynophoto}

\begin{IEEEbiographynophoto}
{\bf Ravi Mazumdar} (F'05) was born in April 1955 in Bangalore, India. He 
obtained the B.Tech. in Electrical Engineering from the Indian Institute of 
Technology, Bombay, India in 1977, the M.Sc. DIC in Control Systems from 
Imperial College, London, U.K. in 1978 and the Ph.D. in Systems Science from 
the University of California, Los Angeles, USA in 1983. 

He is currently a University Research Chair Professor of Electrical and Computer Engineering  at the University of Waterloo, Waterloo, Canada and an Adjunct Professor of ECE at Purdue University. He has served on the faculties of Columbia University (NY, USA) and INRS-Telecommunications (Montreal, Canada) . He held a Chair in Operational Research and Stochastic Systems in the Dept. of Mathematics at the University of Essex (Colchester, UK), and from 1999-2005 was Professor of ECE at Purdue University (West Lafayette, USA). He has held visiting positions and sabbatical leaves at UCLA, the University of Twente (Netherlands), the Indian Institute of Science (Bangalore), the Ecole Nationale Superieure des Telecommunications (Paris), INRIA (Paris) and the University of California, Berkeley.

He is a Fellow of the IEEE and the Royal Statistical Society. He is  a member of the working groups WG6.3 and 7.1 of the IFIP and a member of SIAM and the IMS. He is a recipient of  the IEEE INFOCOM 2006  Best Paper Award and was runner-up for the Best Paper at INFOCOM 1998, 

His research interests are in applied probability, stochastic analysis, optimization,  and game theory with applications to network science, traffic engineering, filtering theory, and wireless systems.
\end{IEEEbiographynophoto}

\end{document}